\newtheorem{theorem}{Theorem}
\newtheorem{remark}{Remark}
\begin{document}
%
% paper title
% Titles are generally capitalized except for words such as a, an, and, as,
% at, but, by, for, in, nor, of, on, or, the, to and up, which are usually
% not capitalized unless they are the first or last word of the title.
% Linebreaks \\ can be used within to get better formatting as desired.
% Do not put math or special symbols in the title.
\title{ Efficient Sampling for Selecting Important Nodes in Random Network}
%
%
% author names and IEEE memberships
% note positions of commas and nonbreaking spaces ( ~ ) LaTeX will not break
% a structure at a ~ so this keeps an author's name from being broken across
% two lines.
% use \thanks{} to gain access to the first footnote area
% a separate \thanks must be used for each paragraph as LaTeX2e's \thanks
% was not built to handle multiple paragraphs
%

\author{ Haidong Li,~\IEEEmembership{}
        Xiaoyun Xu,~\IEEEmembership{} Yijie Peng,~\IEEEmembership{} and~ Chun-Hung Chen~\IEEEmembership{}
       % <-this % stops a space
\thanks{Haidong Li is with the Department
	of Industrial Engineering and Management, Peking University,
	Beijing, 100871 China e-mail:  haidong.li@pku.edu.cn. }

\thanks{Xiaoyun Xu is with the Department
	of Industrial Engineering and Management, Peking University,
	Beijing, 100871 China e-mail: xiaoyun.xu@pku.edu.cn.}

\thanks{Yijie Peng is with the Department
	of Industrial Engineering and Management, Peking University,
	Beijing, 100871 China e-mail:  pengyijie@pku.edu.cn.}% <-this % stops a space
% <-this % stops a space

\thanks{Chun-Hung~Chen is with the Department
	of System Engineering and Operations Research, George Mason University, Fairfax,
	VA, 22030 USA e-mail: cchen9@gmu.edu.}

%\thanks{Manuscript received April 19, 2005; revised August 26, 2015.}
}

\maketitle

% As a general rule, do not put math, special symbols or citations
% in the abstract or keywords.
\begin{abstract}
We consider the problem of selecting important nodes in a random network, where the nodes connect to each other randomly with certain transition probabilities. The node importance is characterized by the stationary probabilities of the corresponding nodes in a Markov chain defined over the network, as in Google's PageRank. Unlike deterministic network, the transition probabilities in random network are unknown but can be estimated by sampling. Under a Bayesian learning framework, we apply the first-order Taylor expansion and normal approximation to provide a computationally efficient posterior approximation of the stationary probabilities. In order to maximize the probability of correct selection, we propose a dynamic sampling procedure which uses not only posterior means and variances of certain interaction parameters between different nodes, but also the sensitivities of the stationary probabilities with respect to each interaction parameter. Numerical experiment results demonstrate the superiority of the proposed sampling procedure.
\end{abstract}

% Note that keywords are not normally used for peerreview papers.
\begin{IEEEkeywords} network, Markov chain, ranking and selection, Bayesian learning, dynamic sampling.
\end{IEEEkeywords}

% For peer review papers, you can put extra information on the cover
% page as needed:
% \ifCLASSOPTIONpeerreview
% \begin{center} \bfseries EDICS Category: 3-BBND \end{center}
% \fi
%
% For peerreview papers, this IEEEtran command inserts a page break and
% creates the second title. It will be ignored for other modes.
\IEEEpeerreviewmaketitle

\section{Introduction}
We consider the problem of selecting the top $m$ important nodes from $n$ ($n>m$) nodes in a network, a central problem for many social and economical networks. In the World Wide Web, web pages and hyperlinks constitute a network, and Google's PageRank lists the most important web pages for each keyword \cite{Brin1998The}, \cite{Page1998The}; in sports events such as basketball and football, teams compete with each other and their win-loss relationship network helps determine which teams should be invited \cite{Govan2008Ranking}, \cite{Luke2007Ranking}; in social network like Twitter, the linking topology is used to rank members for popularity recommendation \cite{Weng2010TwitterRank}. Other examples include venture capitalists selection \cite{Bhat2012InvestorRank} and academic paper searching \cite{walker2007ranking},~\cite{jomsri2011citerank}.

A Markov chain is often used to describe the network. Specifically, each node (page/user/team) in the network is considered as a state of the Markov chain, and the nodes are linked randomly with certain transition probabilities. The node importance is ranked by the stationary probability of a Markov chain, which is the long-run proportion of visits to each state. A larger stationary probability indicates that the corresponding node is more important. Existing works consider a Markov chain with given transition probabilities, and focus on how to efficiently calculate stationary probabilities (see, e.g., \cite{Brin1998The}, \cite{Page1998The}, \cite{Langville2011Google}).
In practice, the transition probabilities are usually not apriori knowledge but estimated from the data. For instance, the hyperlinks between web pages on the Internet change dynamically; the relationship network among twitters is topic-specific; the competition results in sports are uncertain. Therefore, we focus on a random network with unknown transition probabilities.

In the random network, we sample the interactions between the nodes to estimate the transition probabilities as functions of certain interaction parameters, which is in turn to estimate the stationary probabilities. Since sampling could be  expensive, the total number of samples is usually limited. Moreover, the number of transition probabilities grows with the square of the number of nodes, so it would be practically infeasible to estimate all transition probabilities accurately for large-scale networks.  We consider a problem of maximizing the probability of correct selection (PCS) for selecting the top $m$ nodes subject to a fixed sampling budget. The estimation insecurities in different interaction parameters have heterogeneous effects on the PCS. The final ranking may be more sensitive to the perturbation in some interaction parameters. We aim to develop a dynamic sampling procedure to select the top $m$ nodes with a high statistical efficiency.

Our problem is closely related to the ranking and selection  (R\&S) problem well known in the field of simulation optimization \cite{Bechhofer1995Design}, which considers selecting the best or an optimal subset from a finite alternatives. There are the frequentist and Bayesian branches in  R\&S \cite{chen2011stochastic}. Sampling procedures in the frequentist branch allocate samples to guarantee a pre-specified PCS level ~(see, e.g., \cite{rinott1978two}, \cite{Koenig1985A}, \cite{kim2001fully}). The sampling procedures in the Bayesian branch aim to either maximize the PCS or minimize the expected opportunity cost subject to a given sampling budget (see, e.g., \cite{chen2000simulation}, \cite{Chick2001New}, \cite{peng2012efficient}).  Chen et al. \cite{chen2008efficient}, Zhang et al.~\cite{Zhang2012An}, and Gao and Chen~\cite{Gao2015A} study sampling procedures to maximize the PCS for selecting an optimal subset; Xiao and Lee \cite{Xiao2014Efficient} derive the convergence rate of the false subset-selection probability, and offer an allocation rule achieving an asymptotically optimal convergence rate; and Gao and Chen~\cite{gao2015efficient} develop a sampling procedure based on the expected opportunity cost. In R\&S, the alternatives are ranked by the expectations of their sample performance, which can be directly estimated by the sample average of each alternative,
whereas in our problem, the nodes are ranked by the stationary probabilities of the Markov chain, which are estimated indirectly from the interaction samples between different nodes.

In this research, a Bayesian estimation scheme is introduced to update the posterior belief on the unknown interaction parameters, and
an efficient posterior approximation of the stationary probability is derived by Taylor expansion and normal approximation. The asymptotic analysis of the normal approximation is provided. We propose a dynamic allocation scheme for Markov chain  (DAM) to efficiently select the top $m$ nodes, which myopically maximizes an approximation of the PCS and is proved to be consistent. The DAM uses not only posterior means and variances of certain interaction parameters between different nodes, but also the sensitivities of the stationary probabilities with respect to each interaction parameter.

The rest of this paper is organized as follows. In Section~\ref{sec_PD}, we formulate the problem. Section~\ref{sec_PDA} derives a posterior distribution approximation of the stationary probability. The DAM is proposed in Section~\ref{sec_DCP}, and numerical results are given in Section~\ref{sec_NR}. The last section concludes the paper and outlines future directions.

\section{Problem Formulation}\label{sec_PD}
\begin{figure*}[htbp]
  \begin{center}
    \includegraphics[width=5.0in]{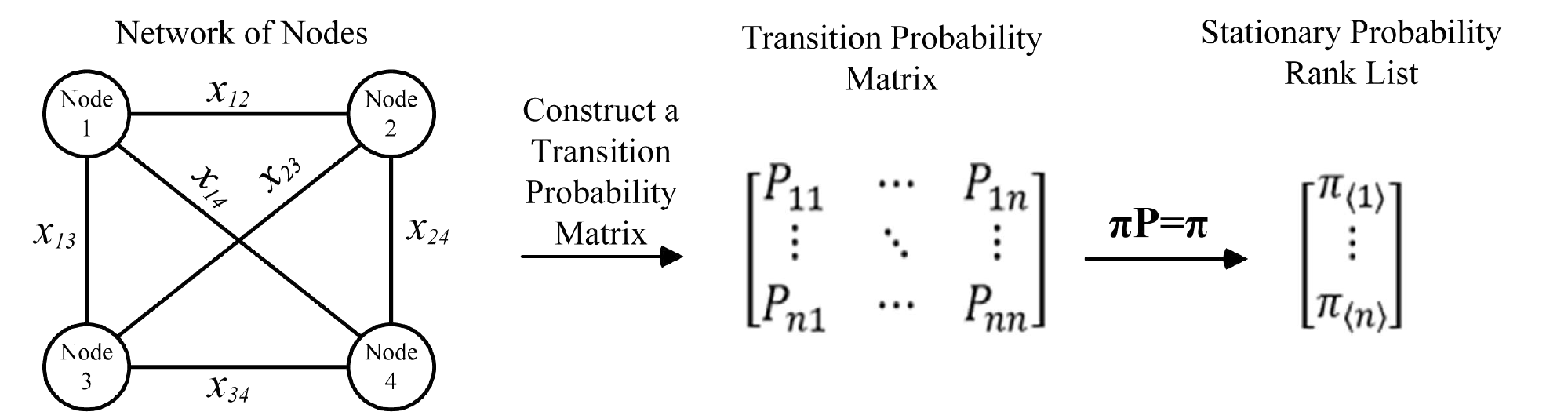}\\
    \caption{Illustration of node ranking process.}\label{pagerank}
  \end{center}
\end{figure*}
The objective of this study is to identify a subset of important nodes of a random network. The importance of the nodes is ranked by their stationary probabilities in a Markov chain. Specifically, $\pi_{i}$ denotes the stationary probability of node $i$, and our goal is to select the top $m$ nodes from $n$ nodes:
\begin{gather*}
\Theta_{m}\triangleq \left\{\langle 1\rangle,\ldots,\langle m\rangle\right\},
\end{gather*}
where notations $\langle i\rangle$, $i=1,\ldots,n$, are the ranking indices such that $\pi_{\langle 1\rangle}\geq\cdots\geq\pi_{\langle n \rangle}$.
The Markov chain of the network is constructed by the interaction strength between each pair of nodes. Given interaction parameter $x_{ij}$ describing the interaction strength between nodes $i$ and $j$, $1\leq i<j \leq n$, the transition probabilities $P_{ij}=P_{ij}(\mathscr{X})$,  $i,j=1, \ldots,n$, are   functions of a vector $\mathscr{X}\triangleq(x_{ij})_{1\leq i<j \leq n}$ with all interaction parameters as its elements. The functions appear in various forms for different applications \cite{Langville2012Who}. The transition matrix $\boldsymbol{\mathrm{P}}=[P_{ij}]_{n\times n}$ is a stochastic matrix satisfying irreducibility and aperiodicity, which guarantees the existence and uniqueness of the stationary distribution. The vector of stationary probabilities
$\boldsymbol{\mathrm{\pi}}\triangleq(\pi_{1},\ldots,\pi_{n})$ is the solution of the following equilibrium equation:
\begin{gather}
\boldsymbol{\mathrm{\pi P}}=\boldsymbol{\mathrm{\pi}},\ \text{and}\ \sum_{i=1}^{n} \pi_{i}=1,\ \pi_{i}>0.\label{eq_stationary}
\end{gather}
Notice that each stationary probability $\pi_{i}$ is also a function of $\mathscr{X}$. Figure~\ref{pagerank} summaries
the process of constructing a transition probability matrix
by the interaction parameters and ranking all nodes according to their stationary probabilities.

In a random network, the interactions between nodes are random. Specifically, let $X_{ij,t},t\in\mathbb{Z}^{+}$ be the $t$-th sample for the interactions between nodes $i$ and $j$, which is assumed to  follow an independent and identically distributed~(i.i.d.) Bernoulli distribution with unknown parameter $x_{ij}$, $1\leq i<j\leq n$. The Bernoulli assumption is natural in many practices that involve pairwise interaction. For instance, in the web page ranking, a binary variable takes $1$ for the visits from web page $j$ to $i$ and takes $0$ for the reverse direction; in the sport matches, the competition results of team $i$ and team $j$ take binary values $1$ (winning) and $0$ (losing). All interaction parameters $x_{ij}$, $1\leq i<j\leq n$, are assumed to be unknown but can be estimated by sampling. With the estimates of the interaction parameters, the transition probabilities and stationary probabilities can be in turn estimated.

Suppose the number of samples is fixed. The research problem of this work is to sequentially allocate each sample based on available information collected throughout previous sampling at each step to estimate the interaction strengthes between different pairs of nodes for efficiently selecting the top $m$ nodes. Given the information of $s$ allocated samples, the  selection is to pick the nodes with the top $m$ posterior estimates of the stationary probabilities, i.e.,
\begin{gather*}
\widehat{\Theta}_{m}^{(s)}\triangleq \left\{\langle1\rangle_{s},\ldots,\langle m\rangle_{s}\right\},
\end{gather*}
where $\langle i\rangle_{s}$, $i=1,\ldots,n$, are the ranking indices such that $$\pi_{\langle1\rangle_{s}}(\mathscr{X}^{(s)})\geq\cdots\geq\pi_{\langle n\rangle_{s}}(\mathscr{X}^{(s)}),$$ and $\mathscr{X}^{(s)}$ is a posterior estimate of $\mathscr{X}$ based on $s$ samples. We measure the statistical efficiency of a sampling procedure by the PCS defined as follows:
\begin{gather*} \mathrm{Pr}\left(\widehat{\Theta}_{m}^{(s)}=\Theta_{m}\right).
\end{gather*}
%Furthermore, another primary problem is how to allocate samples to efficiently improve PCS.

\section{Posterior of Stationary Probabilities}\label{sec_PDA}
We introduce a Bayesian framework to obtain posterior estimates of the stationary probabilities. From the Bayes rule, the posterior distribution of $\pi_{k}$ is
\begin{gather*}
F(d \pi_{k}|\mathcal{E}_{t})\triangleq \frac{L(\mathcal{E}_{t};\pi_{k})F(d \pi_{k}|\zeta_{0})}{\int L(\mathcal{E}_{t};\pi_{k})F(d \pi_{k}|\zeta_{0})},
\end{gather*}
where $\zeta_{0}$ is parameter in prior distribution, $\mathcal{E}_{t}$ is the information collected throughout the $t$-th sample, and $L(\cdot;\pi_{k})$ is the likelihood function of observed samples. In our problem, the likelihood function $L(\cdot;\pi_{k})$ does not have a closed form, so it is computationally  challenging to calculate the posterior distribution of the stationary distribution. To address the problem, we propose an efficient approximation for the posterior distributions of stationary probabilities by using the first-order Taylor expansion:
\begin{equation}\label{Taylor}
\begin{aligned}
\pi_{k}(\mathscr{X})\thickapprox&\pi_{k}(\mathscr{X}^{(t)})\\
&+\sum_{1\leq i<j\leq n}\left[\frac{\partial \pi_{k}(\mathscr{X})}{\partial x_{ij}}\bigg|_{\mathscr{X}=\mathscr{X}^{(t)}}\left(x_{ij}-x_{ij}^{(t)}\right)\right].
\end{aligned}
\end{equation}
Section~\ref{ssec_VE} will provide details in calculating $x_{ij}^{(t)}$, $\pi_{k}(\mathscr{X}^{(t)})$, and $\frac{\partial \pi_{k}(\mathscr{X})}{\partial x_{ij}}\big|_{\mathscr{X}=\mathscr{X}^{(t)}}$, and in Section~\ref{ssec_NA}, we will provide a normal approximation for the posterior of $\pi_{k}$.

\subsection{Posterior of Interaction Parameters}\label{ssec_VE}
Suppose the prior distribution of $x_{ij}$ follows a non-informative prior $U[0,1]$.  By conjugacy \cite{Gelman2014Bayesian}, the posterior distribution of $x_{ij}$  is a Beta distribution $Beta(\alpha_{ij}^{(t)},\beta_{ij}^{(t)})$ with the density given by
$$f\left(x,\alpha_{ij}^{(t)},\beta_{ij}^{(t)}\right)=\frac{\Gamma(\alpha_{ij}^{(t)}+\beta_{ij}^{(t)})}{\Gamma(\alpha_{ij}^{(t)})\Gamma(\beta_{ij}^{(t)})}x^{\alpha_{ij}^{(t)}-1}(1-x)^{\beta_{ij}^{(t)}-1},$$
where $$\alpha_{ij}^{(t)}\triangleq1+\sum_{\ell=1}^{t_{ij}}X_{ij,\ell},\quad \beta_{ij}^{(t)}\triangleq1+\sum_{\ell=1}^{t_{ij}}(1-X_{ij,\ell}),$$ and $t_{ij}$ is the number of samples allocated to estimate $x_{ij}$ after allocating $t$ samples in total, the posterior mean is
\begin{gather*}
x_{ij}^{(t)}\triangleq\alpha_{ij}^{(t)}/(\alpha_{ij}^{(t)}+\beta_{ij}^{(t)}),\label{post_mean}
\end{gather*}
and the posterior variance is
\begin{gather*}
(\sigma_{ij}^{(t)})^{2}\triangleq\alpha_{ij}^{(t)}\beta_{ij}^{(t)}/\left[(\alpha_{ij}^{(t)}+\beta_{ij}^{(t)})^{2}(\alpha_{ij}^{(t)}+\beta_{ij}^{(t)}+1)\right]~.\label{post_var}
\end{gather*}

Let $\mathscr{X}^{(t)}\triangleq(x_{ij}^{(t)})_{1\leq i<j \leq n}$ be a posterior estimate of $\mathscr{X}$.
Posterior estimates of the transition probability matrix $\boldsymbol{\mathrm{P}}$ and its derivative matrix are  $$\boldsymbol{\mathrm{P}}(\mathscr{X}^{(t)})=\left[P_{\ell k}(\mathscr{X}^{(t)})\right]_{n\times n}$$ and
$$\dfrac{\partial \boldsymbol{\mathrm{P}}(\mathscr{X})}{\partial x_{ij}}\bigg|_{\mathscr{X}^{(t)}}\triangleq\left[\frac{\partial P_{\ell k}(\mathscr{X})}{\partial x_{ij}}\bigg|_{\mathscr{X}=\mathscr{X}^{(t)}}\right]_{n\times n},$$
respectively.
Solving equilibrium equation~(\ref{eq_stationary}) by plugging in  $\boldsymbol{\mathrm{P}}(\mathscr{X}^{(t)})$ yields a posterior estimate of the vector of the stationary probabilities:
$$\boldsymbol{\mathrm{\pi}}(\mathscr{X}^{(t)})=\left({\pi}_{1}(\mathscr{X}^{(t)}),\ldots,{\pi}_{n}(\mathscr{X}^{(t)})\right).$$
By taking derivatives on both sides of the equilibrium equation $\pi_{k}(\mathscr{X})=\sum_{\ell=1}^{n}\pi_{\ell}(\mathscr{X})P_{\ell k}(\mathscr{X})$ with respect to $x_{ij}$, we have
\begin{align*}
	\frac{\partial\pi_{k}(\mathscr{X})}{\partial x_{ij}}=&\frac{\partial\sum_{\ell=1}^{n}\pi_{\ell}(\mathscr{X})P_{\ell k}(\mathscr{X})}{\partial x_{ij}}\\
	 =&\sum_{\ell=1}^{n}\left(\frac{\partial\pi_{\ell}(\mathscr{X})}{\partial x_{ij}}P_{\ell k}(\mathscr{X})+\pi_{\ell}(\mathscr{X})\frac{\partial P_{\ell k}(\mathscr{X})}{\partial x_{ij}}\right)\\
	 =&\sum_{\ell=1}^{n}\frac{\partial\pi_{\ell}(\mathscr{X})}{\partial x_{ij}}P_{\ell k}(\mathscr{X})+\sum_{\ell=1}^{n}\pi_{\ell}(\mathscr{X})\frac{\partial P_{\ell k}(\mathscr{X})}{\partial x_{ij}}.
\end{align*}
The derivative of the stationary distribution vector denoted by
$$\frac{\partial\boldsymbol{\mathrm{\pi}}(\mathscr{X})}{\partial x_{ij}}\triangleq\bigg(\frac{\partial\pi_{1}(\mathscr{X})}{\partial x_{ij}},\ldots,\frac{\partial\pi_{n}(\mathscr{X})}{\partial x_{ij}}\bigg)$$
is a solution of the following set of equations:
\begin{gather}\label{derivative_pi}
\left\{
\begin{array}{rcl}
\dfrac{\partial\boldsymbol{\mathrm{\pi}}(\mathscr{X})}{\partial x_{ij}}\left[\boldsymbol{\mathrm{I}}-\boldsymbol{\mathrm{P}}(\mathscr{X})\right]&=&\boldsymbol{\mathrm{{\pi}}}(\mathscr{X})\dfrac{\partial \boldsymbol{\mathrm{P}}(\mathscr{X})}{\partial x_{ij}},\\
\displaystyle{\sum\limits_{k=1}^{n}}\dfrac{\partial\pi_{k}(\mathscr{X})}{\partial x_{ij}}&=&0~.
\end{array}
\right.
\end{gather}
By plugging in $\mathscr{X}^{(t)}$, we have a posterior estimate of the derivative vector of the stationary probabilities:
$$\dfrac{\partial\boldsymbol{\mathrm{\pi}}(\mathscr{X})}{\partial x_{ij}}\bigg|_{\mathscr{X}^{(t)}}\triangleq\left(\dfrac{\partial\pi_{1}(\mathscr{X})}{\partial x_{ij}},\ldots,\dfrac{\partial\pi_{n}(\mathscr{X})}{\partial x_{ij}}\right)\bigg|_{\mathscr{X}=\mathscr{X}^{(t)}}~.$$

\begin{remark}
	In order to calculate  $\boldsymbol{\mathrm{{\pi}}}(\mathscr{X}^{(t)})$ and  $\dfrac{\partial\boldsymbol{\mathrm{\pi}}(\mathscr{X})}{\partial x_{ij}}\bigg|_{\mathscr{X}^{(t)}}$, we need to solve the linear equations involving the transition matrix of a Markov chain. Numerous efficient methods of industrial strength can be applied  \cite{Stewart1994Introduction}, \cite{barrett1994templates}. Google, for instance, has applied the power method to solve linear equations with a transition matrix of order $8.1$ billions \cite{moler2002world}.
\end{remark}

\subsection{Normal Approximation}\label{ssec_NA}
The posterior approximation of $\pi_{k}$ on the right hand side of (\ref{Taylor}) is a linear combination of $x_{ij}
\sim Beta(\alpha_{ij}^{(t)},\beta_{ij}^{(t)})$, $1\leq i<j\leq n$. Since Beta distributions are not closed in a linear combination, we use normal distribution $N\big(x_{ij}^{(t)}, {(\sigma_{ij}^{(t)})}^{2}\big)$ to approximate the posterior distribution of $x_{ij}$, which leads to a closed-form approximate posterior distribution of $\pi_{k}$.

Notice that $Beta(\alpha_{ij}^{(t)},\beta_{ij}^{(t)})$ and $N\big(x_{ij}^{(t)}, {(\sigma_{ij}^{(t)})}^{2}\big)$ share the same mean and variance. We further show that  $Beta(\alpha_{t},\beta_{t})$ converges in distribution to a normal distribution as $t\to +\infty$,  where $\alpha_{t}=\alpha_{0}t$ and $\beta_{t}=\beta_{0}t$ with $\alpha_{0},\beta_{0}>0$.
\begin{theorem}\label{thm_normal}
	As $t\rightarrow+\infty$,
	\begin{gather*}
	 \sqrt{t}\left[W_{t}-\frac{\alpha_{0}}{\alpha_{0}+\beta_{0}}\right]\xrightarrow{\ d\ }N\left(0,\frac{\alpha_{0}\beta_{0}}{(\alpha_{0}+\beta_{0})^{3}}\right),
	\end{gather*}
	where $W_{t}\sim Beta(\alpha_{t},\beta_{t})$ and $\xrightarrow{\ d\ }$ denotes convergence in distribution.
\end{theorem}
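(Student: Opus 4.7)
The plan is to exploit the classical representation of a Beta random variable as a ratio of independent Gammas. Write $W_t = X_t/(X_t + Y_t)$, where $X_t \sim \mathrm{Gamma}(\alpha_0 t, 1)$ and $Y_t \sim \mathrm{Gamma}(\beta_0 t, 1)$ are independent. This lets me reduce the one-dimensional Beta CLT to a two-dimensional Gamma CLT followed by a delta-method step for the smooth map $g(a,b) = a/(a+b)$.

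First I would establish that $\sqrt{t}(X_t/t - \alpha_0) \xrightarrow{\ d\ } N(0,\alpha_0)$, and analogously for $Y_t$. The cleanest route is through moment generating functions: since $E[e^{sX_t/t}] = (1 - s/t)^{-\alpha_0 t}$ on its domain of convergence, expanding $\log(1 - s/\sqrt{t})$ shows that the MGF of $\sqrt{t}(X_t/t - \alpha_0)$ tends to $\exp(\alpha_0 s^2/2)$ pointwise in a neighborhood of zero, yielding the claimed Gaussian limit (alternatively, one can argue via characteristic functions and Lévy's continuity theorem). Independence of $X_t$ and $Y_t$ then upgrades this to the joint statement
\begin{equation*}
\sqrt{t}\bigl((X_t/t,\, Y_t/t) - (\alpha_0,\beta_0)\bigr) \xrightarrow{\ d\ } N(0,\Sigma), \qquad \Sigma = \mathrm{diag}(\alpha_0,\beta_0).
\end{equation*}

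Second, I would apply the multivariate delta method to $g(a,b) = a/(a+b)$, whose gradient at $(\alpha_0,\beta_0)$ is
\begin{equation*}
\nabla g(\alpha_0,\beta_0) = \bigl(\beta_0/(\alpha_0+\beta_0)^2,\ -\alpha_0/(\alpha_0+\beta_0)^2\bigr).
\end{equation*}
The resulting asymptotic variance is $\nabla g\,\Sigma\,\nabla g^{T} = (\alpha_0 \beta_0^2 + \beta_0 \alpha_0^2)/(\alpha_0+\beta_0)^{4} = \alpha_0\beta_0/(\alpha_0+\beta_0)^{3}$, which matches the variance stated in the theorem.

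The main obstacle is really only the Gamma CLT in Step~1: $\alpha_0 t$ is not generally an integer, so one cannot simply write $X_t$ as a sum of i.i.d.\ Exponentials and invoke the standard CLT for sums. The MGF computation above sidesteps this uniformly in $t$ by exploiting the explicit form of the Gamma MGF; once the joint Gaussian limit is in hand, the delta-method step is routine and the variance algebra is a short calculation.
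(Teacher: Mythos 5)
Your proposal is correct and follows essentially the same route as the paper: represent $W_t$ as a ratio of independent Gamma variables, establish a joint Gaussian limit for the (normalized) pair, and finish with the multivariate delta method applied to $g(a,b)=a/(a+b)$; your gradient and variance computations match the paper's stated limit. The one genuine difference is how the Gamma CLT is justified: the paper decomposes $\mathrm{Gamma}(\alpha_t,1)$ as a sum of $\alpha_t$ i.i.d.\ exponentials and invokes the classical CLT, which is only literally valid when $\alpha_t=\alpha_0 t$ is an integer, whereas your MGF (or characteristic-function) expansion of $(1-s/t)^{-\alpha_0 t}$ handles arbitrary real shape parameters uniformly. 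Your version is therefore slightly more careful on exactly the point you identified as the main obstacle; otherwise the two arguments are the same.
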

\begin{proof}
	A $Beta(\alpha_{t},\beta_{t})$ random variable can be represented as $Y_{t}/(Y_{t}+Z_{t})$, where $Y_{t}\sim Gamma(\alpha_{t},1)$, $Z_{t}\sim Gamma(\beta_{t},1)$ and $Y_{t}$ is independent of $Z_{t}$ \cite{Song2011Eighty}.
	Note that $Gamma(n,\lambda)$ can be represented as the sum of $n$ i.i.d. exponential random variables with parameter $\lambda$. By a Central Limit Theorem,
	\begin{gather*}
	 \sqrt{\alpha_{t}}\left[Y_{t}/\alpha_{t}-1\right]\xrightarrow{\ d\ }N(0,1)\ \text{as}\ t\rightarrow+\infty
	\end{gather*}
	and
	\begin{gather*}
	 \sqrt{\beta_{t}}\left[Z_{t}/\beta_{t}-1\right]\xrightarrow{\ d\ }N(0,1)\ \text{as}\ t\rightarrow+\infty.
	\end{gather*}
	Therefore,
	\begin{align*}
	\sqrt{t}&\left[\left(
	\begin{array}{c}
	Y_{t}/t\sqrt{\alpha_{0}} \\
	Z_{t}/t\sqrt{\beta_{0}} \\
	\end{array}
	\right)-\left(
	\begin{array}{c}
	\sqrt{\alpha_{0}} \\
	\sqrt{\beta_{0}} \\
	\end{array}
	\right)
	\right]\\&\qquad\xrightarrow{\ d\ }N\left(\left(
	\begin{array}{c}
	0 \\
	0 \\
	\end{array}
	\right),\left[
	\begin{array}{cc}
	1 & 0 \\
	0 & 1 \\
	\end{array}
	\right]
	\right)\ \text{as}\ t\rightarrow+\infty.
	\end{align*}
	With the multivariate delta method \cite{Cox2006Delta}, if there is a sequence of multivariate random variables $\boldsymbol{\mathrm{\theta_{n}}}$ satisfying $\sqrt{n}(\boldsymbol{\mathrm{\theta_{n}}}-\boldsymbol{\mathrm{\theta}})\xrightarrow{\ d\ }N(\boldsymbol{\mathrm{0}},\boldsymbol{\mathrm{\Sigma}})\ \text{as}\ n\rightarrow+\infty,$ where $\boldsymbol{\mathrm{\theta}}$ and $\boldsymbol{\mathrm{\Sigma}}$ are constant matrices, then for any continuously differentiable function $\boldsymbol{\mathrm{g}}(\cdot)$,
	\begin{gather*}
	 \sqrt{n}(\boldsymbol{\mathrm{g}}(\boldsymbol{\mathrm{\theta_{n}}})-\boldsymbol{\mathrm{g}}(\boldsymbol{\mathrm{\theta}}))\xrightarrow{\ d\ }N(\boldsymbol{\mathrm{0}},(\nabla\boldsymbol{\mathrm{g}}(\boldsymbol{\mathrm{\theta}}))^{T}\boldsymbol{\mathrm{\Sigma}}(\nabla\boldsymbol{\mathrm{g}}(\boldsymbol{\mathrm{\theta}})))\ \text{as} \ n\rightarrow+\infty.
	\end{gather*}
	Note that $W_{t}$ is a function of $Y_{t}/t\sqrt{\alpha_{0}}$ and $Z_{t}/t\sqrt{\beta_{0}}$, i.e., $$W_{t}=\dfrac{\sqrt{\alpha_{0}}(Y_{t}/t\sqrt{\alpha_{0}})}{\sqrt{\alpha_{0}}(Y_{t}/t\sqrt{\alpha_{0}})+\sqrt{\beta_{0}}(Z_{t}/t\sqrt{\beta_{0}})},$$ then
	\begin{gather*}
	 \sqrt{t}\left[W_{t}-\frac{\alpha_{0}}{\alpha_{0}+\beta_{0}}\right]\xrightarrow{\ d\ }N\left(0,\frac{\alpha_{0}\beta_{0}}{(\alpha_{0}+\beta_{0})^{3}}\right)\ \text{as}\ t\rightarrow+\infty,
	\end{gather*}
	which proves the conclusion.
\end{proof}

By the law of large number, $\alpha_{ij}^{(t)}, \beta_{ij}^{(t)}\sim O(t_{ij})$ almost surely (a.s.), as $t_{ij}\to+\infty$, where $A(x)=O(B(x))$  as $x\to\infty$ ($x\to0$) means that $|A(x)/B(x)|\to C>0$ as $x\to\infty$ ($x\to 0$). The asymptotic result in Theorem~\ref{thm_normal} justifies the asymptotic normality of $Beta(\alpha_{ij}^{(t)},\beta_{ij}^{(t)})$. Moreover, we show that the Kullback-Leibler (KL) divergence between $Beta(\alpha_{ij}^{(t)},\beta_{ij}^{(t)})$ and $N\big(x_{ij}^{(t)}, {(\sigma_{ij}^{(t)})}^{2}\big)$ goes to zero as $t_{ij}\to+\infty$. The KL divergence is a statistical (asymmetric) distance between two distributions \cite{Kullback1959Information}. Specifically, if $U$ and $V$ are probability measures over set $\Omega$, the KL divergence between $V$ and $U$ is defined by
\begin{gather*}
D_{KL}(U\|V)=\int_{\Omega}\log\frac{dU}{dV}dU.
\end{gather*}

\begin{theorem}\label{thm_converge}
	When $x_{ij}\neq 1/2$,
	$$D_{KL}(Beta\|Normal)=O\left(t_{ij}^{-1}\right)\quad a.s.\quad t_{ij}\to+\infty,$$
	and when $x_{ij}=1/2$,
	$$D_{KL}(Beta\|Normal)=o\left(t_{ij}^{-1}\right)\quad a.s.\quad t_{ij}\to+\infty,$$
	where $A(x)=o(B(x))$ means that $\underset{x\rightarrow+\infty}{\lim}A(x)/B(x)=0$.
\end{theorem}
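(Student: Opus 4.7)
The plan is to reduce $D_{KL}(Beta\|Normal)$ to a closed-form expression in terms of gamma and digamma functions, and then apply classical asymptotic expansions. Writing $\alpha \equiv \alpha_{ij}^{(t)}$ and $\beta \equiv \beta_{ij}^{(t)}$, and noting that the approximating Gaussian is chosen to match the mean $\mu = x_{ij}^{(t)}$ and variance $(\sigma_{ij}^{(t)})^2$ of $Beta(\alpha,\beta)$, the cross-entropy term integrates cleanly, giving
\begin{equation*}
D_{KL}(Beta\|Normal) = -H(Beta) + \tfrac{1}{2}\log\bigl(2\pi (\sigma_{ij}^{(t)})^2\bigr) + \tfrac{1}{2},
\end{equation*}
where the Beta entropy is $H(Beta) = \log B(\alpha,\beta) - (\alpha-1)\psi(\alpha) - (\beta-1)\psi(\beta) + (\alpha+\beta-2)\psi(\alpha+\beta)$, using the standard identities $E[\log X] = \psi(\alpha)-\psi(\alpha+\beta)$ and $E[\log(1-X)] = \psi(\beta)-\psi(\alpha+\beta)$.

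Next I would invoke the strong law of large numbers to obtain $\alpha/t_{ij}\to x_{ij}$ and $\beta/t_{ij}\to 1-x_{ij}$ almost surely, so that both $\alpha$ and $\beta$ grow linearly in $t_{ij}$. I would then substitute Stirling's expansion
\begin{equation*}
\log\Gamma(z) = (z-\tfrac{1}{2})\log z - z + \tfrac{1}{2}\log(2\pi) + \tfrac{1}{12z} + O(z^{-3})
\end{equation*}
and the digamma expansion $\psi(z) = \log z - \tfrac{1}{2z} - \tfrac{1}{12z^2} + O(z^{-4})$ into the expression for $H(Beta)$. Careful bookkeeping collapses the logarithmic terms into $\tfrac{1}{2}\log\bigl(\alpha\beta/(\alpha+\beta)^3\bigr)$ plus the constant $\tfrac{1}{2}\log(2\pi) + \tfrac{1}{2}$, while the $1/\alpha$, $1/\beta$, and $1/(\alpha+\beta)$ terms combine into $-\tfrac{1}{3\alpha} - \tfrac{1}{3\beta} + \tfrac{5}{6(\alpha+\beta)} + O(t_{ij}^{-2})$. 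Substituting $(\sigma_{ij}^{(t)})^2 = \alpha\beta/[(\alpha+\beta)^2(\alpha+\beta+1)]$, the constants and log terms cancel up to a contribution $\tfrac{1}{2}\log\bigl(1 - 1/(\alpha+\beta+1)\bigr) = -\tfrac{1}{2(\alpha+\beta)} + O(t_{ij}^{-2})$.

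After simplification and the substitution $\alpha = x_{ij} t_{ij} + O(1)$, $\beta = (1-x_{ij}) t_{ij} + O(1)$, I expect the leading-order expression to take the clean form
\begin{equation*}
D_{KL}(Beta\|Normal) = \frac{(1-2x_{ij})^2}{3\,x_{ij}(1-x_{ij})\,t_{ij}} + O\bigl(t_{ij}^{-2}\bigr) \quad \text{a.s.}
\end{equation*}
The first claim $D_{KL} = O(t_{ij}^{-1})$ for $x_{ij}\neq 1/2$ then follows immediately, since the numerator is a strictly positive constant. The second claim follows because the numerator vanishes exactly when $x_{ij}=1/2$, forcing $D_{KL} = O(t_{ij}^{-2}) = o(t_{ij}^{-1})$. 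The main obstacle is the careful bookkeeping in the second paragraph: every $O(t_{ij}^{-1})$ contribution from Stirling, digamma, and the $\log\bigl(1-1/(\alpha+\beta+1)\bigr)$ factor must be retained, and one must verify that they conspire to produce precisely the factor $(1-2x_{ij})^2$, since any arithmetic slip would destroy both the $o(t_{ij}^{-1})$ conclusion at $x_{ij}=1/2$ and the positivity of the leading coefficient (a sanity check guaranteed by $D_{KL}\geq 0$).
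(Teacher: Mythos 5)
Your proposal is correct and follows essentially the same route as the paper: decompose $D_{KL}$ into the negative Beta entropy plus the exactly-integrable cross-entropy term $\tfrac{1}{2}\log(2\pi(\sigma_{ij}^{(t)})^2)+\tfrac{1}{2}$, expand via Stirling and the digamma asymptotics, and invoke the strong law of large numbers to identify the leading $t_{ij}^{-1}$ coefficient. Your factored coefficient $(1-2x_{ij})^2/\bigl(3x_{ij}(1-x_{ij})\bigr)$ is algebraically identical to the paper's $\tfrac{1}{3x_{ij}(1-x_{ij})}-\tfrac{4}{3}$, and makes the nonnegativity and the vanishing at $x_{ij}=1/2$ more transparent.
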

\begin{proof}
	Let $$f\left(x,\alpha_{ij}^{(t)},\beta_{ij}^{(t)}\right)=\frac{\Gamma(\alpha_{ij}^{(t)}+\beta_{ij}^{(t)})}{\Gamma(\alpha_{ij}^{(t)})\Gamma(\beta_{ij}^{(t)})}x^{\alpha_{ij}^{(t)}-1}(1-x)^{\beta_{ij}^{(t)}-1},$$
	where $\Gamma(x)=\int_{0}^{+\infty}z^{x-1}e^{-z}dz$ is the Gamma function, and $$g\left(x,x_{ij}^{(t)},\sigma_{ij}^{(t)}\right)=\left(\sqrt{2\pi}\sigma_{ij}^{(t)}\right)^{-1}\exp\left(-\frac{(x-x_{ij}^{(t)})^{2}}{2{(\sigma_{ij}^{(t)})}^{2}}\right).$$ Then,
	\begin{align*}
		&D_{KL}(Beta\|Normal)\\
		 =&\int_{0}^{1}f\left(x,\alpha_{ij}^{(t)},\beta_{ij}^{(t)}\right)\log\frac{f\left(x,\alpha_{ij}^{(t)},\beta_{ij}^{(t)}\right)}{g\left(x,x_{ij}^{(t)},\sigma_{ij}^{(t)}\right)}dx\\
		=&\log e\times\int_{0}^{1}f\left(x,\alpha_{ij}^{(t)},\beta_{ij}^{(t)}\right)\ln\frac{f\left(x,\alpha_{ij}^{(t)},\beta_{ij}^{(t)}\right)}{g\left(x,x_{ij}^{(t)},\sigma_{ij}^{(t)}\right)}dx\\
		=&\log e\times\left(\int_{0}^{1}f\left(x,\alpha_{ij}^{(t)},\beta_{ij}^{(t)}\right)\ln f\left(x,\alpha_{ij}^{(t)},\beta_{ij}^{(t)}\right)dx\right.\\&\left.\qquad -\int_{0}^{1}f\left(x,\alpha_{ij}^{(t)},\beta_{ij}^{(t)}\right)\ln g\left(x,x_{ij}^{(t)},\sigma_{ij}^{(t)}\right)dx\right).
	\end{align*}
	From \cite{Cover1991Elements}, the entropy for the Beta distribution can be calculated by
	\begin{align*}
		 &\int_{0}^{1}f\left(x,\alpha_{ij}^{(t)},\beta_{ij}^{(t)}\right)\ln f\left(x,\alpha_{ij}^{(t)},\beta_{ij}^{(t)}\right)dx\\
		=&-\ln B(\alpha_{ij}^{(t)},\beta_{ij}^{(t)})+(\alpha_{ij}^{(t)}-1)(\psi(\alpha_{ij}^{(t)})-\psi(\alpha_{ij}^{(t)}+\beta_{ij}^{(t)}))\\
		 &\qquad+(\beta_{ij}^{(t)}-1)(\psi(\beta_{ij}^{(t)})-\psi(\alpha_{ij}^{(t)}+\beta_{ij}^{(t)})),
	\end{align*}
	where the digamma function $\psi(\cdot)$ is the first derivative of the log-gamma function, and $B(\alpha,\beta)=\Gamma(\alpha)\Gamma(\beta)/\Gamma(\alpha+\beta)$. By Stirling's formula, we have that as $\alpha_{ij}^{(t)},\beta_{ij}^{(t)}\rightarrow+\infty$,
	\begin{gather*}
		 \ln\Gamma(\alpha_{ij}^{(t)}+\beta_{ij}^{(t)})=\frac{1}{2}\ln2\pi+(\alpha_{ij}^{(t)}+\beta_{ij}^{(t)}-\frac{1}{2})\ln(\alpha_{ij}^{(t)}+\beta_{ij}^{(t)}) \\-(\alpha_{ij}^{(t)}+\beta_{ij}^{(t)})+\frac{1}{12}(\alpha_{ij}^{(t)}+\beta_{ij}^{(t)})^{-1}+o((\alpha_{ij}^{(t)}+\beta_{ij}^{(t)})^{-1});\\
		 \ln\Gamma(\alpha_{ij}^{(t)})=\frac{1}{2}\ln2\pi+(\alpha_{ij}^{(t)}-\frac{1}{2})\ln\alpha_{ij}^{(t)}-\alpha_{ij}^{(t)}+\frac{1}{12}(\alpha_{ij}^{(t)})^{-1}\\
     +o((\alpha_{ij}^{(t)})^{-1});\\
		 \ln\Gamma(\beta_{ij}^{(t)})=\frac{1}{2}\ln2\pi+(\beta_{ij}^{(t)}-\frac{1}{2})\ln\beta_{ij}^{(t)}-\beta_{ij}^{(t)}+\frac{1}{12}(\beta_{ij}^{(t)})^{-1}\\
     +o((\beta_{ij}^{(t)})^{-1})~.
	\end{gather*}
	With the results in \cite{Abramowitz1964Handbook}, as $\alpha_{ij}^{(t)},\beta_{ij}^{(t)}\rightarrow+\infty$, the digamma function has the following expansion:
	\begin{gather*}
	\psi(\alpha_{ij}^{(t)}+\beta_{ij}^{(t)})=\ln (\alpha_{ij}^{(t)}+\beta_{ij}^{(t)})-\frac{1}{2}(\alpha_{ij}^{(t)}+\beta_{ij}^{(t)})^{-1}\\-\frac{1}{12}(\alpha_{ij}^{(t)}+\beta_{ij}^{(t)})^{-2}+o((\alpha_{ij}^{(t)}+\beta_{ij}^{(t)})^{-2});\\
	\psi(\alpha_{ij}^{(t)})=\ln \alpha_{ij}^{(t)}-\frac{1}{2}(\alpha_{ij}^{(t)})^{-1}-\frac{1}{12}(\alpha_{ij}^{(t)})^{-2}+o((\alpha_{ij}^{(t)})^{-2});\\
	\psi(\beta_{ij}^{(t)})=\ln \beta_{ij}^{(t)}-\frac{1}{2}(\beta_{ij}^{(t)})^{-1}-\frac{1}{12}(\beta_{ij}^{(t)})^{-2}+o((\beta_{ij}^{(t)})^{-2}).
	\end{gather*}
	In addition,
	\begin{align*}
		 &\int_{0}^{1}f\left(x,\alpha_{ij}^{(t)},\beta_{ij}^{(t)}\right)\ln g\left(x,x_{ij}^{(t)},\sigma_{ij}^{(t)}\right)dx\\
		 =&-\int_{0}^{1}f\left(x,\alpha_{ij}^{(t)},\beta_{ij}^{(t)}\right)\left(\ln(\sqrt{2\pi}\sigma_{ij}^{(t)})+\frac{(x-x_{ij}^{(t)})^{2}}{2{(\sigma_{ij}^{2})}^{(t)}}\right)dx\\
		 =&-\ln\left(\sqrt{2\pi}\frac{(\alpha_{ij}^{(t)})^{\frac{1}{2}}(\beta_{ij}^{(t)})^{\frac{1}{2}}}{(\alpha_{ij}^{(t)}+\beta_{ij}^{(t)})(\alpha_{ij}^{(t)}+\beta_{ij}^{(t)}+1)^{\frac{1}{2}}}\right)-\frac{1}{2}.
	\end{align*}
	By the law of large numbers, we have
	$$\alpha_{ij}^{(t)},\beta_{ij}^{(t)}\rightarrow+\infty\ \ a.s.\ \text{when}\ t_{ij}\rightarrow+\infty$$
	and
	 $$\underset{t_{ij}\rightarrow+\infty}{\lim}\frac{\alpha_{ij}^{(t)}}{\alpha_{ij}^{(t)}+\beta_{ij}^{(t)}}=1-\underset{t_{ij}\rightarrow+\infty}{\lim}\frac{\beta_{ij}^{(t)}}{\alpha_{ij}^{(t)}+\beta_{ij}^{(t)}}= x_{ij}\ \ a.s..$$
	Therefore, as $t_{ij}\rightarrow+\infty$,
	\begin{align*}
		&D_{KL}(Beta\|Normal)\\
		=&\log e\times\left(\int_{0}^{1}f\left(x,\alpha_{ij}^{(t)},\beta_{ij}^{(t)}\right)\ln f\left(x,\alpha_{ij}^{(t)},\beta_{ij}^{(s)}\right)dx\right.\\&\left.\qquad-\int_{0}^{1}f\left(x,\alpha_{ij}^{(t)},\beta_{ij}^{(t)}\right)\ln g\left(x,x_{ij}^{(t)},\sigma_{ij}^{(t)}\right)dx\right)\\
		=&\log e\times\left(\frac{1}{2}\ln\left[1-(\alpha_{ij}^{(t)}+\beta_{ij}^{(t)}+1)^{-1}\right]+\frac{1}{3\alpha_{ij}^{(t)}}+\frac{1}{3\beta_{ij}^{(t)}}\right.\\
		 &\left.-\frac{5}{6(\alpha_{ij}^{(t)}+\beta_{ij}^{(t)})}+o\left(t_{ij}^{-1}\right)\right)\\
		=&\log e\times\left(\frac{1}{3x_{ij}(1-x_{ij})}-\frac{4}{3}\right)t_{ij}^{-1}+o\left(t_{ij}^{-1}\right)\quad a.s.,
	\end{align*}
	where the last equation holds due to the fact that $\ln(1-x^{-1})=-x^{-1}+o(x^{-1})$ as $x\rightarrow+\infty$.
	Notice that $$\dfrac{1}{3x_{ij}(1-x_{ij})}-\dfrac{4}{3}=0$$ if and only if $x_{ij}=1/2$. The conclusion follows immediately.
\end{proof}

\begin{remark}
	Notice that $D_{KL}(Beta\|Normal)$ converges at the fastest rate when $x_{ij}=1/2$. This could be explained by the fact that the normal distribution is a symmetrical distribution, and the Beta distribution is also a symmetrical distribution when $x_{ij}=1/2$. Numerical results show that $D_{KL}(Beta\|Normal)$ is close to zero even when $t_{ij}$ is not sufficiently large. For instance, when $t_{ij}=3$, $\alpha_{ij}^{(t)}=2$ and $\beta_{ij}^{(t)}=3$, the KL divergence between $Beta(2,3)$ and $N(2/5,1/25)$ is $0.0444$; when $t_{ij}=18$, $\alpha_{ij}^{(t)}=8$ and $\beta_{ij}^{(t)}=12$, the KL divergence between $Beta(8,12)$ and $N(2/5,2/175)$  is $0.0049$.
\end{remark}

The discussions above suggest that the statistical characteristics of $Beta(\alpha_{ij}^{(t)},\beta_{ij}^{(t)})$ and $N\big(x_{ij}^{(t)}, {(\sigma_{ij}^{(t)})}^{2}\big)$ are close when the allocated sample size is fairly large. Thus, we replace $x_{ij}\sim Beta(\alpha_{ij}^{(t)},\beta_{ij}^{(t)})$ with $\tilde{x}_{ij}\sim N\big(x_{ij}^{(t)}, {(\sigma_{ij}^{(t)})}^{2}\big)$ in (\ref{Taylor}). Then, the posterior approximation of $\pi_{k}$ in (\ref{Taylor}) is a linear combination of normal distributions, which follows the following normal distribution:
\begin{gather*}
\pi_{k}\sim N\left({\pi}_{k}^{(t)},(\tau_{k}^{(t)})^{2}\right),
\end{gather*}
where
$${\pi}_{k}^{(t)}\triangleq \pi_{k}(\mathscr{X}^{(t)}),$$
and
\begin{eqnarray*}
	(\tau_{k}^{(t)})^{2}\triangleq\sum_{1\leq i<j\leq n}\left[\left(\frac{\partial \pi_{k}(\mathscr{X})}{\partial x_{ij}}\bigg|_{\mathscr{X}=\mathscr{X}^{(t)}}\right)^{2}{(\sigma_{ij}^{(t)})}^{2}\right]~.
%	&=&\sum_{1\leq i<j\leq n}\left[\left(\frac{\partial \pi_{k}(\mathscr{X})}{\partial x_{ij}}\bigg|_{\mathscr{X}=\mathscr{X}^{(t)}}\right)^{2}\frac{\alpha_{ij}^{(t)}\beta_{ij}^{(t)}}{(\alpha_{ij}^{(t)}+\beta_{ij}^{(t)})^{2}(\alpha_{ij}^{(t)}+\beta_{ij}^{(t)}+1)}\right].
\end{eqnarray*}
When $n=2$, \begin{align*}(\tau_{1}^{(t)})^{2}&=\left(\frac{\partial \pi_{1}(\mathscr{X})}{\partial x_{1,2}}\bigg|_{\mathscr{X}=\mathscr{X}^{(t)}}\right)^{2}{(\sigma_{1,2}^{(t)})}^{2}\\
&=\left(\frac{\partial \pi_{2}(\mathscr{X})}{\partial x_{1,2}}\bigg|_{\mathscr{X}=\mathscr{X}^{(t)}}\right)^{2}{(\sigma_{1,2}^{(t)})}^{2}=(\tau_{2}^{(t)})^{2}.\end{align*}
Note that variance $(\tau_{k}^{(t)})^{2}$ in the normal approximation for the posterior distribution of the stationary probability  is affected by both posterior variance $(\sigma_{ij}^{(t)})^{2}$ of $x_{ij}$ and  posterior estimate $\dfrac{\partial \pi_{k}(\mathscr{X})}{\partial x_{ij}}\bigg|_{\mathscr{X}=\mathscr{X}^{(t)}}$ for the derivative of $\pi_k$ with respect to $x_{ij}$. Obviously, increasing in the posterior variance of $x_{ij}$ will result in the increasing in the variance of $\pi_{k}$ in the posterior approximation. On the other hand, if stationary probability $\pi_{k}$ is insensitive to parameter insecurity in $x_{ij}$, i.e., $\partial \pi_{k}/\partial x_{ij}$ is  small, large variance in $x_{ij}$ may not lead to large variance in $\pi_{k}$.
\begin{table*}[htbp]
  \caption{The Influence of Estimation Errors in Tested Parameters.}\vspace*{2mm}\label{Tab1}
  \resizebox{\textwidth}{10mm}{
    \begin{tabular}{cccccccccccc}
      \toprule
      \multicolumn{3}{c}{}
      & \multicolumn{1}{c}{} & \multicolumn{2}{c}{True Value} & \multicolumn{1}{c}{} & \multicolumn{2}{c}{Estimation 1}
      & \multicolumn{1}{c}{} & \multicolumn{2}{c}{Estimation 2}\\
      \cline{1-12}
      \multicolumn{3}{c}{Parameter ($x_{1,2},x_{1,3},x_{2,3}$)}
      & \multicolumn{1}{c}{} & \multicolumn{2}{c}{(0.7,\ 0.35,\ 0.6)} & \multicolumn{1}{c}{} & \multicolumn{2}{c}{(0.7,\ 0.35+0.02,\ 0.6)}
      & \multicolumn{1}{c}{} & \multicolumn{2}{c}{(0.7,\ 0.35,\ 0.6+0.02)}\\
      \cline{1-12}
      \multicolumn{3}{c}{Stationary Probability ($\pi_{1},\pi_{2},\pi_{3}$)}
      & \multicolumn{1}{c}{} & \multicolumn{2}{c}{(0.3477,\ 0.2916,\ 0.3607)} & \multicolumn{1}{c}{} & \multicolumn{2}{c}{(0.3582,\ 0.2897,\ 0.3521)}
      & \multicolumn{1}{c}{} & \multicolumn{2}{c}{(0.3497,\ 0.2989,\ 0.3514)}\\
      \cline{1-12}
      \multicolumn{3}{c}{Order Statistics}
      & \multicolumn{1}{c}{} & \multicolumn{2}{c}{(3,\ 1,\ 2)} & \multicolumn{1}{c}{} & \multicolumn{2}{c}{(1,\ 3,\ 2)}
      & \multicolumn{1}{c}{} & \multicolumn{2}{c}{(3,\ 1,\ 2)}\\
      \bottomrule
  \end{tabular}}
\end{table*}
Thus,  posterior variance $(\sigma_{ij}^{(t)})^{2}$ of $x_{ij}$ is scaled by  posterior estimate $\dfrac{\partial \pi_{k}(\mathscr{X})}{\partial x_{ij}}\bigg|_{\mathscr{X}=\mathscr{X}^{(t)}}$ for the derivative of $\pi_k$ with respect to $x_{ij}$ in  $(\tau_{k}^{(t)})^{2}$.

\section{Dynamic Sampling for Markov Chain}\label{sec_DCP}
Given the posterior approximations of stationary probabilities, we try to derive an efficient dynamic sampling procedure based on an approximate PCS.
The PCS for selecting the top $m$ nodes can be expressed as
\begin{align*}
	 \text{PCS}=&\mathrm{Pr}\left(\widehat{\Theta}_{m}^{(s)}=\Theta_{m}\right)\\
	=&\mathrm{Pr}\left(\pi_{\langle i\rangle_{s}}>\pi_{\langle j\rangle_{s}}, ~ i\in \{1,\ldots,m\},~ j\in \{m+1,\ldots,n\}\right).
\end{align*}

Insecurity in estimating $x_{ij}$, $1\leq i<j\leq n$, could result in  insecurity in estimating $\pi_k$, $k=1,\ldots,n$, which in turn leads to a low PCS. A noticeable feature in estimating the stationary probabilities is that the marginal influence of $x_{ij}$'s estimation insecurity on the stationary probabilities and thus the PCS is heterogeneous. To be more specific, large perturbations in some interaction parameters may have little influence on the stationary probabilities, whereas small perturbations in other interaction parameters could cause significant changes in the rank of the stationary probabilities and thus greatly affect the PCS. Such heterogeneity can be demonstrated by the following simple example: consider a 3-node network which has three interaction parameters ($x_{1,2},x_{1,3},x_{2,3}$)  to be estimated. The second column of Table~\ref{Tab1} lists the true interaction parameters, the stationary probabilities, and the final ranking. Table~\ref{Tab1} shows that the perturbation in $x_{1,3}$ could cause a significant change in stationary probabilities (Estimation 1), which even leads to incorrect selection of the best node. On the other hand, the same perturbation in $x_{2,3}$ has little influence on correctly selecting the optimal node subset (Estimation 2). To enhance the PCS under a limited sample size, this heterogeneity needs to be taken into consideration in the design of the sampling scheme.

We aim to obtain a dynamic sampling policy $\boldsymbol{\mathrm{D}}_{s}$ to maximize the PCS:
\begin{gather}\label{eq_pf}
\underset{\boldsymbol{\mathrm{D}}_{s}}{\max}\ \ \mathrm{Pr}\left(\pi_{\langle i\rangle_{s}}>\pi_{\langle j\rangle_{s}}, ~i\in \{1,\ldots,m\}, ~j\in \{m+1,\ldots,n\}\right).
\end{gather}
The dynamic sampling policy $\boldsymbol{\mathrm{D}}_{s}$ is a sequence of maps $\boldsymbol{\mathrm{D}}_{s}(\cdot)=(D_{1}(\cdot),\ldots,D_{s}(\cdot))$. Based on information set $\mathcal{E}_{t-1}$, $1\leq t\leq s$,  $D_{t}(\mathcal{E}_{t-1})\in\left\{(i,j):~ 1\leq i<j\leq n\right\}$ allocates the $t$-th sample to estimate an interaction parameter $x_{ij}$, $1\leq i<j\leq n$. Similar to that in \cite{peng2014dynamic} and \cite{peng2017ranking}, the policy optimization problem such as~(\ref{eq_pf}) can be formulated as a stochastic control (dynamic programming) problem. The expected payoff for a sampling scheme~$\boldsymbol{\mathrm{D}}_{s}$ can be defined recursively by
\begin{align}\label{value_function}
&V_{s}(\mathcal{E}_{s};\boldsymbol{\mathrm{D}}_{s})
\triangleq\mathbb{E}\left[\mathbbm{1}\left\{\widehat{\Theta}_{m}^{(s)}=\Theta_{m}\right\}\Big|\mathcal{E}_{s}\right]\nonumber\\
=&\text{Pr}\left(\pi_{\langle i\rangle_{s}}>\pi_{\langle j\rangle_{s}}, i\in \{1,..,m\}, j\in \{m+1,..,n\}\Big|\mathcal{E}_{s}\right),
\end{align}
and for $0\leq t<s$,
\begin{align}
&V_{t}(\mathcal{E}_{t};\boldsymbol{\mathrm{D}}_{s})
\triangleq\mathbb{E}\left[V_{t+1}(\mathcal{E}_{t}\cup \{X_{ij,t+1}\};\boldsymbol{\mathrm{D}}_{s})\Big|\mathcal{E}_{t}\right]\Big|_{(i,j)=D_{t+1}(\mathcal{E}_{t})},\nonumber
\end{align}
where equation~(\ref{value_function}) is a posterior integrated PCS.
Then, the optimal sampling policy is well defined by
\begin{eqnarray*}\label{scp}
	\boldsymbol{\mathrm{D}}_{s}^{*}\ \triangleq\ \arg\underset{\boldsymbol{\mathrm{D}}_{s}}{\max}\  V_{0}(\zeta_{0};\boldsymbol{\mathrm{D}}_{s}),
\end{eqnarray*}
where $\zeta_{0}$ is prior information. It is important to note that the definition of decision variable in our study is different from the one in R\&S. For the R\&S problem, the decision is to choose an alternative $i$ in sampling, whereas our decision is to choose a pair of nodes $(i,j)$ in sampling.

In principle, the backward induction can be used to solve the stochastic control problem, but it suffers from curse-of-dimensionality (see \cite{peng2017ranking}). To address this issue, we adopt approximate dynamic programming (ADP) schemes which make dynamic decision based on a value function approximation (VFA) and keep learning the VFA with decisions moving forward \cite{powell2007approximate}.
%At step $t$, the value function for selecting the top-$m$ nodes based on the posterior information is
From Section~\ref{sec_PDA}, an approximation for the posterior distribution of $\pi_{k}$ conditioned on $\mathcal{E}_{t}$ is a normal distribution with mean ${\pi}_{k}^{(t)}$ and $(\tau_{k}^{2})^{(t)}$.
Therefore, the joint distribution of vector $$\big(\pi_{\langle1\rangle_{t}}-\pi_{\langle m+1\rangle_{t}},..,\pi_{\langle1\rangle_{t}}-\pi_{\langle n\rangle_{t}},..,\pi_{\langle m\rangle_{t}}-\pi_{\langle m+1\rangle_{t}},..,\pi_{\langle m\rangle_{t}}-\pi_{\langle n\rangle_{t}}\big)$$ follows a joint normal distribution with mean vector $$\big({\pi}_{\langle1\rangle_{t}}^{(t)}-{\pi}_{\langle m+1\rangle_{t}}^{(t)},..,{\pi}_{\langle1\rangle_{t}}^{(t)}-{\pi}_{\langle n\rangle_{t}}^{(t)},..,{\pi}_{\langle m\rangle_{t}}^{(t)}-{\pi}_{\langle m+1\rangle_{t}}^{(t)},..,{\pi}_{\langle m\rangle_{t}}^{(t)}-{\pi}_{\langle n\rangle_{t}}^{(t)}\big),$$ and covariance matrix $\Gamma'\Lambda\Gamma$, where $$\Lambda\triangleq diag((\sigma_{1,2}^{(t)})^{2},\ldots,(\sigma_{1,n}^{(t)})^{2}, (\sigma_{2,3}^{(t)})^{2},\ldots,(\sigma_{n-1,n}^{(t)})^{2}),$$ and $\Gamma\triangleq$\\\tiny \begin{align*}
&\begin{bmatrix}d_{1,2}^{(t)}(1,m+1) & \cdot\cdot & d_{1,2}^{(t)}(1,n) & \cdot\cdot & d_{1,2}^{(t)}(m,m+1) & \cdot\cdot & d_{1,2}^{(t)}(m,n)\\
\vdots & & \vdots & & \vdots & & \vdots\\
d_{1,n}^{(t)}(1,m+1) & \cdot\cdot & d_{1,n}^{(t)}(1,n) & \cdot\cdot & d_{1,n}^{(t)}(m,m+1) &\cdot\cdot & d_{1,n}^{(t)}(m,n)\\
d_{2,3}^{(t)}(1,m+1) & \cdot\cdot & d_{2,3}^{(t)}(1,n) & \cdot\cdot & d_{2,3}^{(t)}(m,m+1) & \cdot\cdot & d_{2,3}^{(t)}(m,n)\\
\vdots & & \vdots & & \vdots & & \vdots\\
d_{n-1,n}^{(t)}(1,m+1) &\cdot\cdot & d_{n-1,n}^{(t)}(1,n) &\cdot\cdot & d_{n-1,n}^{(t)}(m,m+1) & \cdot\cdot& d_{n-1,n}^{(t)}(m,n)\\
\end{bmatrix},\end{align*}\normalsize
where  matrix $\Lambda$ is a diagonal matrix whose dimensionality is the same as the number of interaction parameters, $\Gamma$ is $[n(n-1)/2]\times [m(n-m)]$ matrix, and for $i\in\{1,\ldots,m\}$, $j\in \{m+1,\ldots,n\}$, $1\leq r<q\leq n$,
$$d_{r,q}^{(t)}(i,j)\triangleq \frac{\partial \big(\pi_{{\langle i\rangle}_{t}}(\mathscr{X})-\pi_{{\langle j\rangle}_{t}}(\mathscr{X})\big)}{\partial x_{rq}}\bigg|_{\mathscr{X}=\mathscr{X}^{(t)}}~.$$
Elements in matrix $\Gamma$ reflect the posterior information on sensitivities of the differences in stationary probabilities with respect to $x_{ij}$.

To derive a dynamic sampling procedure with an analytical form, we use the same VFA technique developed in \cite{peng2017ranking}. At any step $t$, we treat the ($t+1$)-th step as the last step and try to maximize the expected value function by allocating the ($t+1$)-th sample to a pair $(i,j)$:
\begin{gather*}
\widetilde{V}_{t}(\mathcal{E}_{t};(i,j))\triangleq\mathbb{E}\left[\widetilde{V}_{t+1}(\mathcal{E}_{t}\cup \{X_{ij,t+1}\})\Big|\mathcal{E}_{t}\right],
\end{gather*}
where
\begin{align*}\label{value_function_2}
	&\widetilde{V}_{t+1}(\mathcal{E}_{t+1})\triangleq \\
	&\text{Pr}\big(\pi_{\langle i\rangle_{t+1}}>\pi_{\langle j\rangle_{t+1}}, ~ i\in \{1,\ldots,m\}, ~ j\in \{m+1,\ldots,n\}\big|\mathcal{E}_{t+1}\big).
\end{align*}
The posterior probability above is an integral of the multivariate standard normal density over a region encompassed by some hyperplanes. We approximate the posterior probability by an integral over a maximum  tangent inner ball in the integral region. See more details about this approximation in \cite{peng2017ranking}. By symmetry of the normal density, maximizing the integral over a maximum tangent inner ball is equivalent to maximizing the volume of the ball, which  has the following analytical formula:
\begin{align*}
	v(\mathcal{E}_{t+1})=
	 \underset{\substack{k\in\{1,\ldots,m\}\\\ell\in\{m+1,\ldots,n\}}}{\min}\eta_{k\ell}^{(t+1)}\bigg/\zeta_{k\ell}^{(t+1)},
\end{align*}
where 
$$\eta_{k\ell}^{(t)}\triangleq \left({\pi}_{\langle k\rangle_{t}}^{(t)}-{\pi}_{\langle \ell\rangle_{t}}^{(t)}+\epsilon\right)^2,$$
and
$$\zeta_{k\ell}^{(t)}\triangleq \sum\limits_{1\leq r<q\leq n}\left[d_{r,q}^{(t)}(k,\ell)~\sigma_{rq}^{(t)}\right]^{2}~.$$
Here, we introduce a small positive real number $\epsilon$, so that the volume of the ball $v(\mathcal{E}_{t+1})$ must be positive. In~\cite{peng2017ranking} where the samples follow normal distributions, the volume of the ball is positive a.s. However, in our study, since all samples follow Bernoulli distributions,  $({\pi}_{\langle i\rangle_{t}}^{(t)}-{\pi}_{\langle j\rangle_{t}}^{(t)})$ is a discrete random variable so that the event 
$$\left({\pi}_{\langle i\rangle_{t}}^{(t)}-{\pi}_{\langle j\rangle_{t+1}}^{(t)}\right)^{2}=0$$ happens with a positive probability. In other words, the hyperplanes encompassing the integral region could pass through the origin of space. In order to have a positive volume of the ball, we shift each hyperplane away from the origin by distance $\epsilon$, which is visualized in Figure~\ref{feature}. In implementation, $\epsilon$ is set as a small positive number, e.g, $0.0001$.

\begin{figure}[htbp]
	\begin{center}
		\includegraphics[width=3.5in]{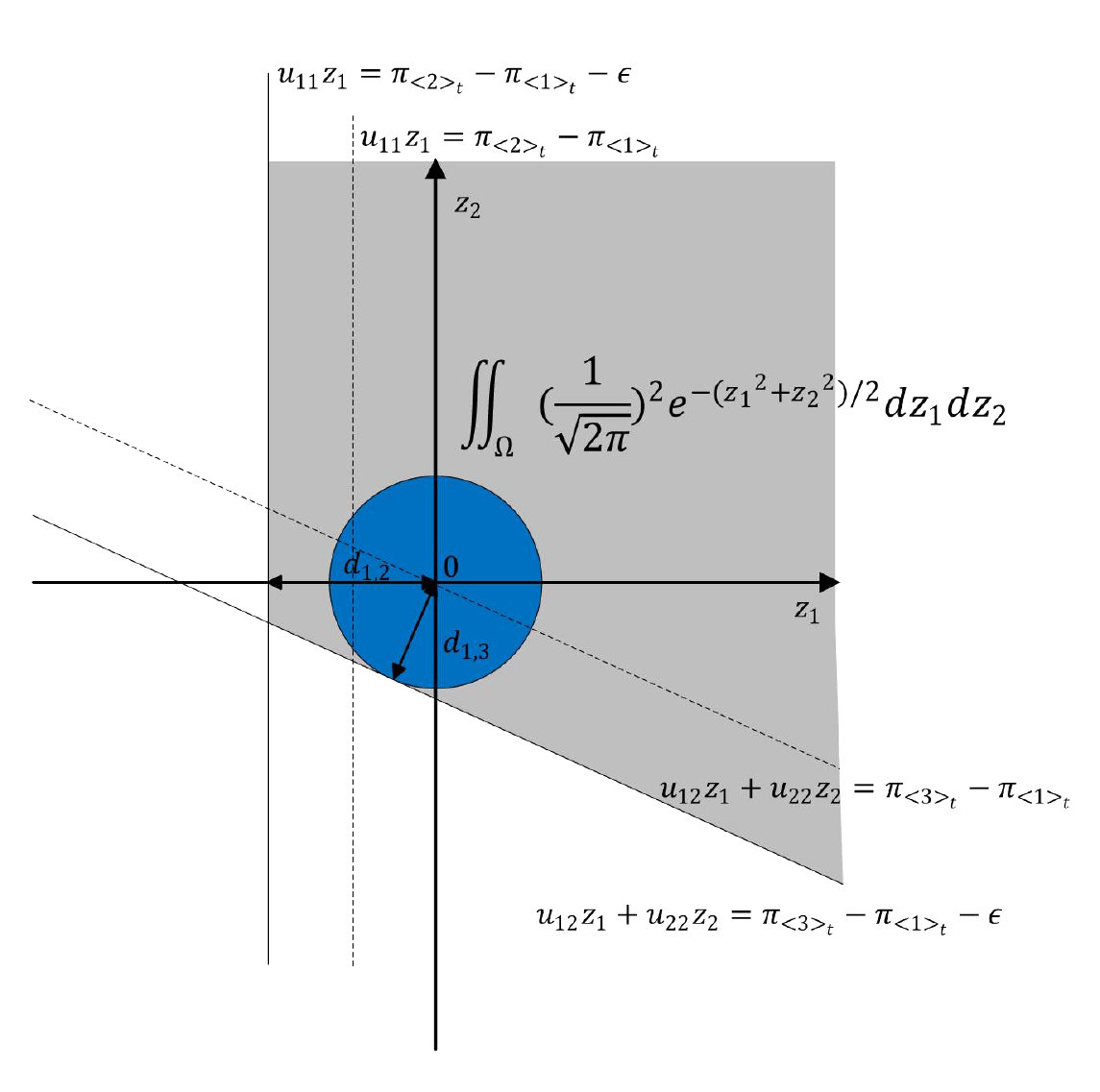}\\
		\caption{Illustration of the effect of $\epsilon$.}\label{feature}
	\end{center}
\end{figure}

By certainty equivalent approximation \cite{bertsekas1995dynamic},
\begin{gather*}
v\big(\mathcal{E}_{t}\cup \mathbb{E}[X_{ij,t+1}|\mathcal{E}_{t}]\big)
\approx \mathbb{E}\left[v\big(\mathcal{E}_{t}\cup \{X_{ij,t+1}\}\big)\Big|\mathcal{E}_{t}\right],
\end{gather*}
we have the following VFA:
\begin{eqnarray}
&&\widehat{V}_{t}(\mathcal{E}_{t};(i,j))\triangleq v(\mathcal{E}_{t}\cup \mathbb{E}[X_{ij,t+1}|\mathcal{E}_{t}])\nonumber\\
&=&\underset{\substack{k\in \{1,\ldots,m\}\\ \ell\in \{m+1,\ldots,n\}}}{\min}\frac{\eta_{k\ell}^{(t)}}{\sum\limits_{1\leq r<q\leq n}\left(d_{r,q}^{(t)}(k,\ell)\right)^{2}(\sigma_{rq}^{(t)})^{2}_{(i,j)}},\nonumber\\\label{AVFA}
\end{eqnarray}
where
\begin{gather*}
(\sigma_{rq}^{(t)})^{2}_{(i,j)}\triangleq \left\{
\begin{array}{rcl}
\frac{\alpha_{rq}^{(t)}\beta_{rq}^{(t)}}{(\alpha_{rq}^{(t)}+\beta_{rq}^{(t)})^{2}(\alpha_{rq}^{(t)}+\beta_{rq}^{(t)}+2)}, \text{when}\ (r,q)=(i,j);\\
\frac{\alpha_{rq}^{(t)}\beta_{rq}^{(t)}}{(\alpha_{rq}^{(t)}+\beta_{rq}^{(t)})^{2}(\alpha_{rq}^{(t)}+\beta_{rq}^{(t)}+1)}, \text{when}\ (r,q)\neq(i,j).
\end{array}
\right.
\end{gather*}

A dynamic allocation scheme for Markov chain (DAM) that optimizes the VFA  is given by
\begin{gather}\label{eq_policy}
D_{t+1}(\mathcal{E}_{t})=\arg \underset{1\leq i<j\leq n}{\max} \widehat{V}_{t}(\mathcal{E}_{t};(i,j)).
\end{gather}
The DAM uses the information on the posterior means of the stationary probabilities, which are calculated by the posterior means of the interaction parameters via equilibrium equation (\ref{eq_stationary}), the posterior variances of the interaction parameters, and the sensitivities of the stationary probabilities with respect to each interaction parameter.
Ignoring the small positive constant $\epsilon$, we note that 
$\eta_{k\ell}^{(t)}$ and $\zeta_{k\ell}^{(t)}$ are the squared mean and variance of the approximate posterior distribution of  the difference in the stationary probabilities, respectively. Therefore, equation (\ref{AVFA}) can be rewritten as $$\underset{\substack{k\in \{1,\ldots,m\}\\ \ell\in \{m+1,\ldots,n\}}}{\min}\ 1/c_{v}^{2}(k,\ell),$$ where $c_{v}(k,\ell)$ is the coefficient of variation (CV, or sometimes called noise-signal ratio) of the posterior approximation for $\left(\pi_{{\langle k\rangle}_{t}}-\pi_{{\langle \ell\rangle}_{t}}\right)$. The DAM  minimizes the maximum of $c_{v}(k,\ell)$'s, which is intuitively reasonable since large $c_{v}(k,\ell)$ implies high difficulty in comparing $\pi_{\langle k\rangle_{t}}$ and $\pi_{\langle \ell\rangle_{t}}$ from the posterior information. The DAM sequentially allocates each sample to estimate the interaction parameter to reduce the CV of the difference in each pair of the stationary probabilities. In particular, the DAM focuses on the pair most difficult in comparison among all possible pairs
in differentiating the top $m$ stationary probabilities from the rests based on the posterior information at each step.
The DAM is proved to be consistent in the following theorem.
\begin{theorem}\label{thm_consistency}
If for $1\leq i<j\leq n$ and $t\in\mathbb{Z}^+$,
 $$\dfrac{\partial\boldsymbol{\mathrm{\pi}}(\mathscr{X})}{\partial x_{ij}}\bigg|_{\mathscr{X}^{(t)}}\neq \mathbf{0} \quad a.s.,$$ 
  then the DAM is consistent, i.e., $$\underset{s\rightarrow+\infty}{\lim}\widehat{\Theta}_{m}^{(s)}=\Theta_{m},\ a.s.$$
\end{theorem}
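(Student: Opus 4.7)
The plan is to establish consistency in two stages: first, show that the DAM samples every interaction parameter infinitely often almost surely, and second, invoke the strong law of large numbers together with continuity of the stationary-distribution map to conclude that $\widehat{\Theta}_{m}^{(s)}=\Theta_{m}$ for all sufficiently large $s$. The non-degeneracy assumption on $\partial\boldsymbol{\pi}/\partial x_{ij}$ will be crucial in the first stage, because it prevents the VFA-based allocation from ignoring any parameter permanently.

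For the first stage, I would argue by contradiction. Suppose there exists a pair $(i^{*},j^{*})$ that is sampled only finitely often along some sample path of positive probability. Then $t_{i^{*}j^{*}}$ stays bounded, so $\alpha_{i^{*}j^{*}}^{(t)}+\beta_{i^{*}j^{*}}^{(t)}$ is bounded and hence $(\sigma_{i^{*}j^{*}}^{(t)})^{2}$ is bounded away from zero. Since $s\to\infty$ while only $n(n-1)/2$ pairs exist, at least one other pair $(i',j')$ is sampled infinitely often, and for every such pair $(\sigma_{i'j'}^{(t)})^{2}\to 0$ a.s. by the Beta-posterior-variance formula. Next, using the differentiability of $\mathscr{X}\mapsto\boldsymbol{\pi}(\mathscr{X})$ obtained from \eqref{derivative_pi}, the sensitivities $d_{r,q}^{(t)}(k,\ell)$ and the means $\pi_{k}^{(t)}$ remain uniformly bounded. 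Combined with the hypothesis that $\partial\boldsymbol{\pi}/\partial x_{i^{*}j^{*}}\neq\mathbf{0}$, one can show that there exists a pair $(k,\ell)$ with $k\in\{1,\ldots,m\}$ and $\ell\in\{m+1,\ldots,n\}$ such that $d_{i^{*}j^{*}}^{(t)}(k,\ell)$ does not vanish along a subsequence; here the equality $\sum_{k=1}^{n}\partial\pi_{k}/\partial x_{i^{*}j^{*}}=0$ from \eqref{derivative_pi} is used to guarantee that at least one top-$m$ sensitivity differs from at least one bottom-$(n-m)$ sensitivity. Consequently, as $t$ grows, the denominator $\sum_{r<q}(d_{r,q}^{(t)}(k,\ell))^{2}(\sigma_{rq}^{(t)})^{2}_{(i,j)}$ appearing in \eqref{AVFA} is asymptotically dominated by the term associated with $(i^{*},j^{*})$. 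Because $(\sigma_{i^{*}j^{*}}^{(t)})^{2}_{(i^{*},j^{*})}<(\sigma_{i^{*}j^{*}}^{(t)})^{2}$ strictly (by the ratio $(\alpha+\beta+1)/(\alpha+\beta+2)<1$), sampling $(i^{*},j^{*})$ strictly increases $\widehat{V}_{t}$ relative to sampling any other pair whose posterior variance is already negligible. This contradicts the DAM rule \eqref{eq_policy}, so every pair must in fact be sampled infinitely often a.s.

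Given infinite sampling of each $x_{ij}$, the strong law of large numbers yields $x_{ij}^{(t)}\to x_{ij}$ a.s. as $s\to\infty$. Because the stationary-distribution map is the unique solution of \eqref{eq_stationary} and depends continuously on the entries of $\boldsymbol{\mathrm{P}}(\mathscr{X})$ (which are themselves continuous functions of $\mathscr{X}$), we obtain $\pi_{k}(\mathscr{X}^{(s)})\to\pi_{k}(\mathscr{X})$ a.s. for every $k$. Since the true stationary probabilities satisfy $\pi_{\langle m\rangle}>\pi_{\langle m+1\rangle}$ (this strict gap is implicit in the problem: otherwise the top-$m$ set is not uniquely defined), the posterior ranking $\langle\cdot\rangle_{s}$ must coincide with the true ranking $\langle\cdot\rangle$ on the boundary between the top $m$ and the rest for all $s$ large enough, so $\widehat{\Theta}_{m}^{(s)}=\Theta_{m}$ a.s. eventually.

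The main obstacle is the forcing step: the VFA is a minimum of ratios, so the argmax over $(i,j)$ depends on the identity of the minimizing $(k^{*},\ell^{*})$, which itself varies with $t$. Carefully handling the case where the minimizer shifts and where some sensitivities $d_{i^{*}j^{*}}^{(t)}(k,\ell)$ happen to vanish will require exploiting both the summation constraint on $\partial\pi_{k}/\partial x_{ij}$ and the assumed non-degeneracy of $\partial\boldsymbol{\pi}/\partial x_{ij}$; once this dominance-of-the-undersampled-term argument is in place, the remainder of the proof reduces to a standard SLLN-plus-continuity argument.
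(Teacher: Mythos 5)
Your proposal follows essentially the same route as the paper's proof: a contradiction argument showing every interaction parameter is sampled infinitely often — using the fact that the one-step variance reduction is bounded away from zero for an undersampled parameter but vanishes for a well-sampled one, together with the sum-to-zero constraint on $\partial\pi_k/\partial x_{ij}$ and the non-degeneracy hypothesis to guarantee some binding cross-pair sensitivity does not vanish — followed by the strong law of large numbers and continuity of the stationary distribution. The ``main obstacle'' you flag (the shifting identity of the minimizing pair $(k^*,\ell^*)$) is resolved in the paper exactly as you sketch, via a case analysis on whether the sensitivities of the binding pair with respect to the finitely-sampled parameters vanish in the limit.
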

\begin{proof}
	We only need to prove that each $x_{ij}$ will be sampled infinitely often a.s. following DAM, and the consistency will follow by the law of large numbers. Suppose parameter $x_{ij}$ is only sampled finitely often and parameter $x_{rq}$ is sampled infinitely often. Therefore, there exists a finite number $N_{0}$ such that parameter $x_{ij}$ will stop receiving replications after the sampling number $t$ exceeds $N_{0}$. Thus we have $$\underset{t\rightarrow+\infty}{\lim}(\sigma_{ij}^{(t)})^2>0,\quad \underset{t\rightarrow+\infty}{\lim}(\sigma_{rq}^{(t)})^2=0~.$$  
	
	If there exists a pair ($k,\ell$), $k\in\{1,\ldots,m\}$, $\ell\in \{m+1,\ldots,n\}$ such that $$\underset{t\rightarrow+\infty}{\lim}\left[d_{i,j}^{(t)}(k,\ell)\right]^2>0,$$ then $$\underset{t\rightarrow+\infty}{\lim}v(\mathcal{E}_t)<+\infty~.$$ Consider the pair $$(k',\ell')\triangleq\underset{\substack{k\in \{1,\ldots,m\}\\ \ell\in \{m+1,\ldots,n\}}}{\arg \min}\underset{t\rightarrow+\infty}{\lim}\eta_{k\ell}^{(t)}/\zeta_{k\ell}^{(t)}~.$$ If $$\underset{t\rightarrow+\infty}{\lim}\left[d_{i,j}^{(t)}(k',\ell')\right]^2=0$$ holds for each parameter $x_{ij}$ which is only sampled finitely often, then $$\underset{t\rightarrow+\infty}{\lim}\eta_{k'\ell'}^{(t)}/\zeta_{k'\ell'}^{(t)}=+\infty,$$ which contradicts with $$\underset{t\rightarrow+\infty}{\lim}\frac{\eta_{k'\ell'}^{(t)}}{\zeta_{k'\ell'}^{(t)}}=\underset{\substack{k\in \{1,\ldots,m\}\\ \ell\in \{m+1,\ldots,n\}}}{\min}\underset{t\rightarrow+\infty}{\lim}\frac{\eta_{k\ell}^{(t)}}{\zeta_{k\ell}^{(t)}}=\underset{t\rightarrow+\infty}{\lim}v(\mathcal{E}_t)<+\infty~.$$ However, if $$\underset{t\rightarrow+\infty}{\lim}\left[d_{i,j}^{(t)}(k',\ell')\right]^2>0$$ holds for a certain parameter $x_{ij}$ which is only sampled finitely often, by noticing that $$\underset{t\rightarrow+\infty}{\lim}\left[(\sigma_{ij}^{(t)})^2-(\sigma_{ij}^{(t)})^2_{(i,j)}\right]>0,$$ and $$\underset{t\rightarrow+\infty}{\lim}\left[(\sigma_{rq}^{(t)})^2-(\sigma_{rq}^{(t)})^2_{(r,q)}\right]=0,$$
	we have 
	\begin{gather*}
	 \underset{t\rightarrow+\infty}{\lim}\left[\widehat{V}_{t}(\mathcal{E}_{t};(i,j))-v(\mathcal{E}_t)\right]>0~\ a.s.,
	\end{gather*}
	and
	\begin{gather*}
	 \underset{t\rightarrow+\infty}{\lim}\left[\widehat{V}_{t}(\mathcal{E}_{t};(r,q))-v(\mathcal{E}_t)\right]=0~\ a.s.,
	\end{gather*}
	which contradicts with the sampling rule in equation~(\ref{eq_policy}) that the parameter
	with the largest $\widehat{V}_{t}(\mathcal{E}_{t};(i,j))$ is sampled.
	
	 Therefore, $$\underset{t\rightarrow+\infty}{\lim}\left[d_{i,j}^{(t)}(k,\ell)\right]^2=0$$ holds for each pair ($k,\ell$), $k\in\{1,\ldots,m\}$, $\ell\in \{m+1,\ldots,n\}$, that is, for $ 1\leq k,\ell\leq n$ $$\underset{t\rightarrow+\infty}{\lim}\frac{\partial \pi_{k}(\mathscr{X})}{\partial x_{ij}}\bigg|_{\mathscr{X}=\mathscr{X}^{(t)}}=\underset{t\rightarrow+\infty}{\lim}\frac{\partial \pi_{\ell}(\mathscr{X})}{\partial x_{ij}}\bigg|_{\mathscr{X}=\mathscr{X}^{(t)}}~.$$ Since  $$\sum\limits_{k=1}^{n}\frac{\partial\pi_{k}(\mathscr{X})}{\partial x_{ij}}\bigg|_{\mathscr{X}=\mathscr{X}^{(t)}}=0,$$ 
	we have 
	 $$\underset{t\rightarrow+\infty}{\lim}\frac{\partial \pi_{k}(\mathscr{X})}{\partial x_{ij}}\bigg|_{\mathscr{X}=\mathscr{X}^{(t)}}=0,\quad  1\leq k\leq n,$$ which contradicts with $$\dfrac{\partial\boldsymbol{\mathrm{\pi}}(\mathscr{X})}{\partial x_{ij}}\bigg|_{\mathscr{X}^{(t)}}\neq \mathbf{0} \quad a.s.\quad 1\leq i<j\leq n,\quad  t\in\mathbb{Z}^+~.$$ Therefore, DAM must be consistent.
\end{proof}

\begin{remark}
The assumptions in Theorem~\ref{thm_consistency} can be checked for the Markov chain in Google's PageRank \cite{Langville2011Google}, where the transition probabilities are given by
\begin{align*}
&P_{ji}\triangleq x_{ij}/(n-1),\quad  1\leq i<j\leq n;\\
&P_{ij}\triangleq1/(n-1)-P_{ji},\quad i\neq j;\\
&P_{ii}\triangleq1-\sum_{j\neq i}P_{ij},\quad 1\leq i\leq n.
\end{align*}
This Markov chain is a random walk, which is irreducible and aperiodic. At each step, the current state (page) $j$ chooses another page with equal probability to interact, and if page $i$ is chosen, the next state will be $i$ with probability $x_{ij}$ or still stay in $j$ otherwise. The importance of each web page is described by the long-run proportion of time spent on each state, i.e., its stationary probability. 
For the transition matrix in PageRank,
\begin{gather*}
\frac{\partial \boldsymbol{\mathrm{P}}}{\partial x_{ij}}=\ \left\{
\begin{array}{rcl}
1/(n-1)&, &\text{for element}\ (i,i)\ \text{and}\ (j,i);\\
-1/(n-1)&, &\text{for element}\ (i,j)\ \text{and}\ (j,j);\\
0&, &\text{otherwise}.
\end{array}
\right.
\end{gather*}
From (\ref{derivative_pi}), 
 $$\dfrac{\partial\boldsymbol{\mathrm{\pi}}(\mathscr{X})}{\partial x_{ij}}\bigg|_{\mathscr{X}^{(t)}}=\mathbf{0}$$ is equivalent to $\pi_{i}^{(t)}+\pi_{j}^{(t)}=0$. By ergodicity of Markov chain, $$\pi_{i}^{(t)}+\pi_{j}^{(t)}>0\quad  a.s., \quad 1\leq i<j\leq n, \quad t\in\mathbb{Z}^+~.$$ Therefore,
  $$\dfrac{\partial\boldsymbol{\mathrm{\pi}}(\mathscr{X})}{\partial x_{ij}}\bigg|_{\mathscr{X}^{(t)}}\neq \mathbf{0}\quad  a.s., \quad 1\leq i<j\leq n, \quad t\in\mathbb{Z}^+~.$$
\end{remark}

\section{Numerical Results}\label{sec_NR}
In the numerical experiments, we test the performance of different sampling procedures for ranking node importance in the Markov chain of PageRank. 
The proposed DAM is compared with the  equal allocation (EA) and  an approximately optimal allocation (AOA) adapted from a sampling procedure for classic R\&S problem in \cite{peng2017ranking}. Specifically, EA equally allocates sampling budget to estimate each $x_{ij},\ 1\leq i<j\leq n$ (roughly $s/(n(n-1)/2)$ samples for each $x_{ij}$);  AOA  allocates samples according to the following rules:
\begin{gather*}
\widehat{A}_{t+1}(\mathcal{E}_{t})=\arg \underset{1\leq i<j\leq n}{\max} V_{t}(\mathcal{E}_{t};(i,j)),
\end{gather*}
where
\begin{eqnarray*}
	 V_{t}(\mathcal{E}_{t};(i,j))\triangleq\underset{\substack{k\in \{1,\ldots,m\}\\ \ell\in \{m+1,\ldots,n\}}}{\min}\frac{\eta_{k\ell}^{(t)}}{\sum\limits_{1\leq r<q\leq n}(\sigma_{rq}^{(t)})^{2}_{(i,j)}}.
\end{eqnarray*}
%and
%\begin{gather*}
%(\sigma_{rq}^{2})^{(t;(i,j))}=\left\{
%\begin{array}{rcl}
%\frac{\alpha_{rq}^{(t)}\beta_{rq}^{(t)}}{(\alpha_{rq}^{(t)}+\beta_{rq}^{(t)})^{2}(\alpha_{rq}^{(t)}+\beta_{rq}^{(t)}+2)}, \text{when}\ (r,q)=(i,j);\\
%\frac{\alpha_{rq}^{(t)}\beta_{rq}^{(t)}}{(\alpha_{rq}^{(t)}+\beta_{rq}^{(t)})^{2}(\alpha_{rq}^{(t)}+\beta_{rq}^{(t)}+1)}, \text{when}\ (r,q)\neq(i,j).
%\end{array}
%\right.
%\end{gather*}
Notice that the AOA only utilizes the information in the posterior means of the stationary probabilities and the posterior variances of the interaction parameters, but it does not consider the information in the sensitivities of the stationary probabilities with respect to each interaction parameter.
In all numerical examples, the statistical efficiency of the sampling procedures is measured by the PCS estimated by 10,000 independent experiments. The PCS is reported as a function of the sampling budget in each experiment. 

\subsection*{Example 1: selecting top-3 nodes in a 10-node network}
In this example, we aim to identify the top-3 nodes from a network of 10 nodes. Suppose the true value of each interaction parameter is $$x_{ij}=0.5+0.03\times(j-i),\quad 1\leq i<j\leq 10~.$$ As the assumption in Section~\ref{sec_PD}, the samples of the interaction parameter $x_{ij}$ are generated i.i.d. from a Bernoulli distribution with parameter $x_{ij}$. According to the definition of $x_{ij}$, node $i$ is visited more often in the interactions between nodes $i$ and $j$ when $x_{ij}>0.5$. It is straightforward to know that nodes $1$, $2$, $3$ are the top-$3$ nodes. 

In Figure~\ref{10-3}, we can see that AOA has a slight edge over EA, which could be attributed to the reason that EA utilizes no sample information while AOA utilizes the information in the posterior means and variances, and DAM performs significantly better than the other two sampling procedures. In order to attain PCS = 80\%, DAM needs less than 1500 samples, whereas EA and AOA require more than 2000 samples. That is to say DAM reduces the sampling budget by more than 25\%. 
The performance enhancement of DAM could be attributed to the utilization of not only the information in posterior means and variances but also the sensitivity information (${\partial \pi_{k}}/{\partial x_{ij}}$). The numerical result shows that in this example,  the sensitivity information plays a dominant role in enhancing the sampling efficiency.

\begin{figure}[htbp]
	\begin{center}
		\includegraphics[width=3.5in]{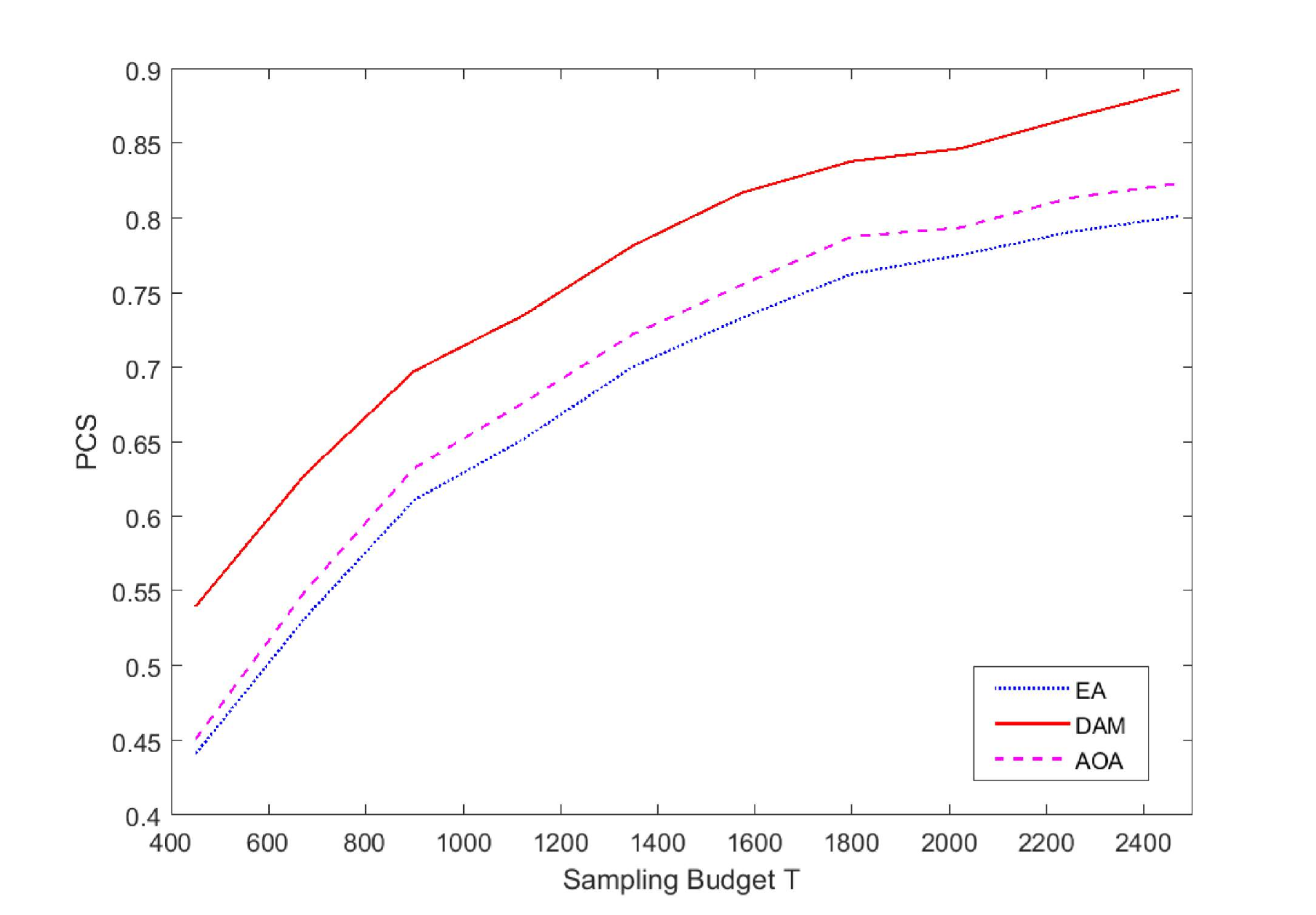}\\
		\caption{PCS of the three sampling procedures in Example 1.}\label{10-3}
	\end{center}
\end{figure}

\begin{figure}[htbp]
	\begin{center}
		\includegraphics[width=3.5in]{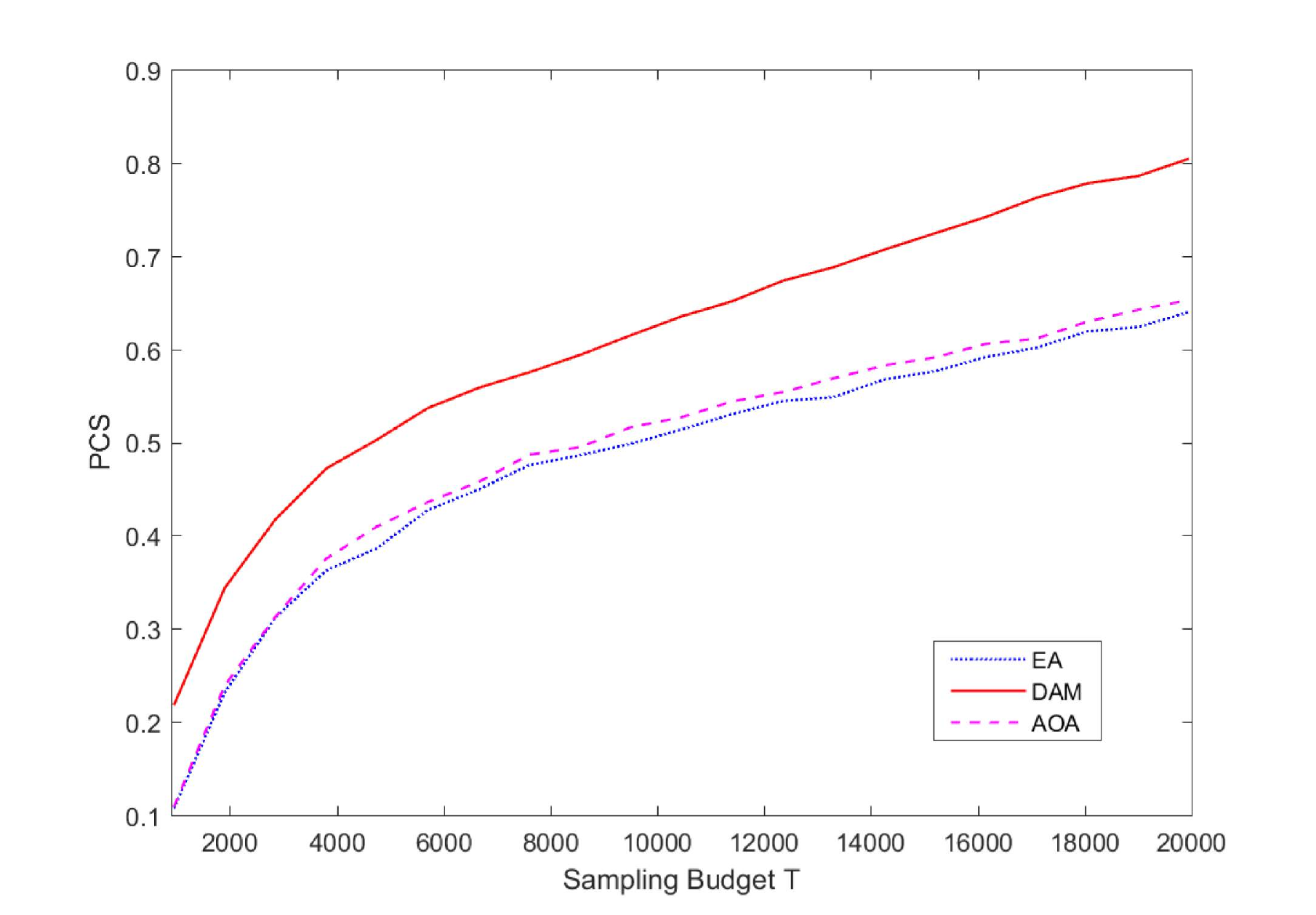}\\
		\caption{PCS of the three sampling procedures in Example 2.}\label{20-5}
	\end{center}
\end{figure}

\subsection*{Example 2: selecting top-5 nodes in a 20-node network}
In this example, we test the performance of the proposed DAM in a larger scale network with 20 nodes. The true value of each interaction parameter $x_{ij}$, $1\leq i<j\leq 20$, is drawn from a uniform prior distribution $U[0,1]$. Our objective is to identify the optimal subset of nodes with the top-5 largest stationary probabilities. Figure~\ref{20-5} illustrates the performance of the three sampling procedures. Similar to Example 1, DAM remains as the most efficient sampling procedure among the three, and AOA is slightly better than EA. However, it can be noticed that the advantage of DAM is more significant when the network size becomes larger. In order to attain PCS = 60\%, the number of samples consumed by DAM is less than 9000, while both EA and AOA require more than 15000 samples.  That is to say DAM reduces the sampling budget by more than 40\%. 

\subsection*{Example 3: selecting top-15 nodes in a 105-node website network}
In this example, we test the robustness for the performance  of DAM in a real data set from the Sogou Labs, a major web searching engine company in China  (\url{http://www.sogou.com/labs/resource/t-link.php}). The data set includes a mapping table from URL to document ID and a list of hyperlink relationship of the documents. Our objective is to select the top-15 websites from a $105$-node website network. The true value of each interaction parameter $x_{ij}$ is estimated from the data set. Figure~\ref{link} illustrates the interactions among the websites. For instance, the visits from website $j$ to $i$ occur 12 times, while the visits from website $i$ to $j$ only occur 7 times, so the true value of $x_{ij}$ is set as $12/(12+7)$. 
\begin{figure}[htbp]
	\begin{center}
		\includegraphics[width=3.5in]{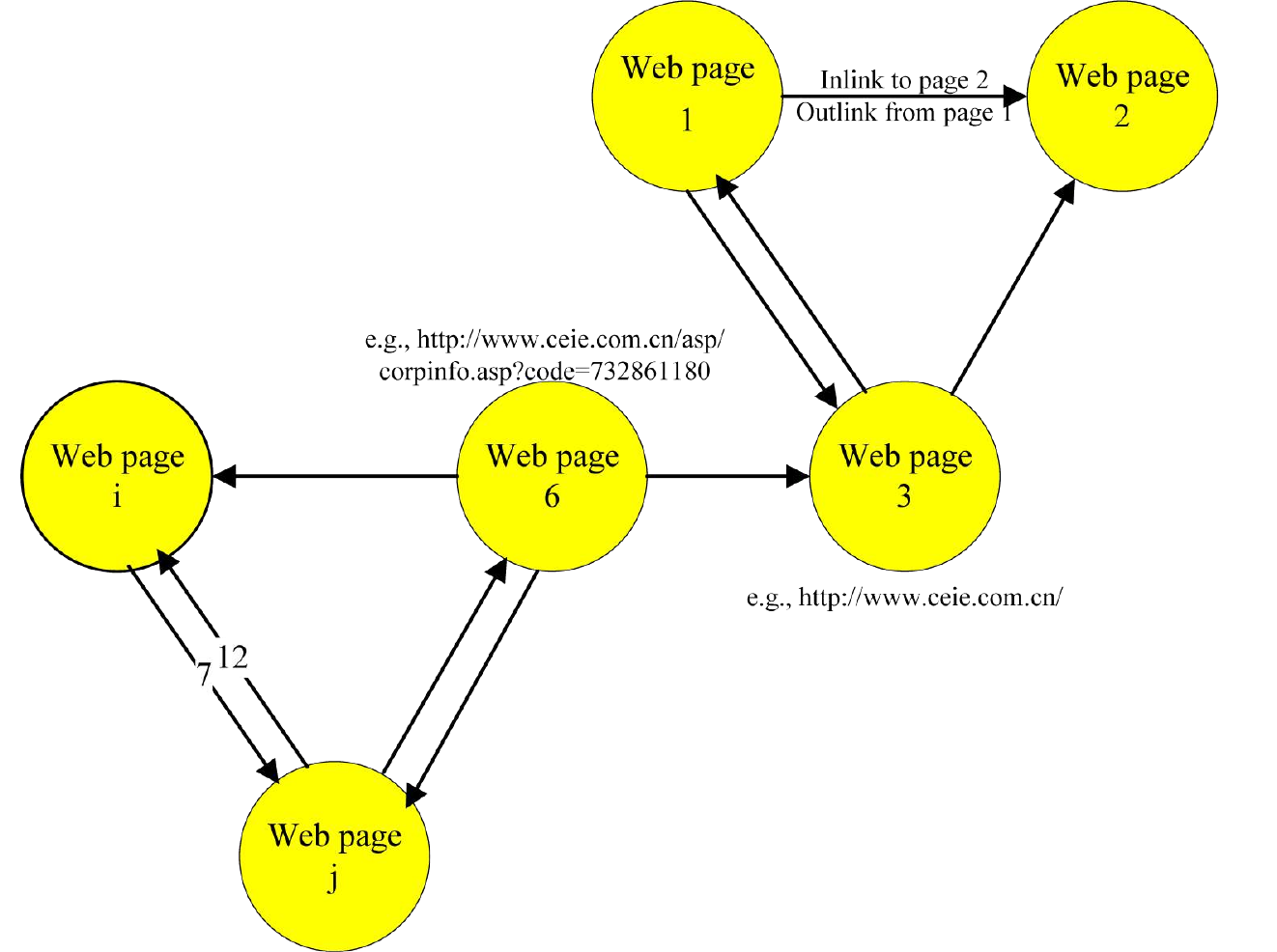}\\
		\caption{Interactions in Website Network from Sogou Labs.}\label{link}
	\end{center}
\end{figure}

\begin{figure}[htbp]
	\begin{center}
		\includegraphics[width=3.5in]{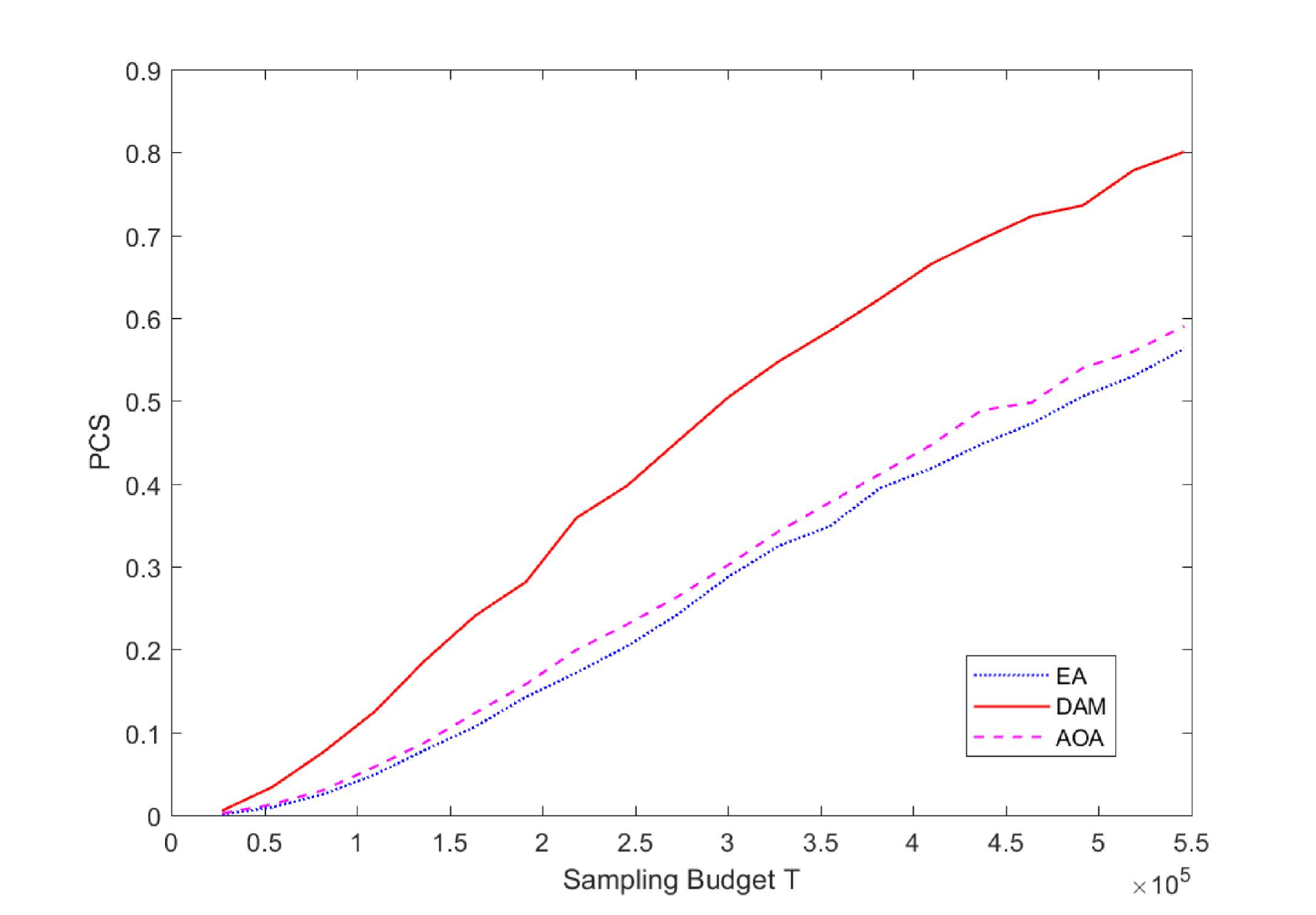}\\
		\caption{PCS of the three sampling procedures in Example 3.}\label{105-15}
	\end{center}
\end{figure}

In Figure~\ref{105-15}, we can see the PCS of the DAM grows at a much faster rate than those of the EA and AOA. In order to attain PCS = 60\%, DAM consumes less than $3.7\times10^5$ samples, while both EA and AOA require more than $5.5\times10^5$ samples. In addition, we see that the gap between the PCS of the DAM and those of the EA and AOA widens as the sampling budget increases. 

\section{Conclusions}
This paper deals with a sample allocation problem for selecting important nodes in random network. Node importance is ranked by the stationary probabilities of a Markov chain. We use the first-order Taylor expansion and normal approximation to estimate the posterior distribution of the stationary probabilities. An efficient sampling procedure named DAM is derived by maximizing a VFA one-step ahead. The sensitivity of the stationary probability with respect to each interaction parameter is taken into account in the design of DAM. Numerical experiments demonstrate that DAM is much more efficient than the other tested sampling procedures, and the performance of the proposed method is robust in different scales of the networks and the real data situation.

Unlike existing literature considering deterministic network, we focus on random network with unknown interaction parameters. Random network is a more realistic scenario of the node importance ranking problem. The proposed DAM improves the sample allocation pattern for the Markov ranking in random network, which reflects a trade-off among posterior means, variances, and sensitivities. As suggested by the numerical testing results, DAM can significantly save the sampling budget in practical applications such as Google's PageRank. 

In general, a Markov chain can have several ergodic classes and transient states, and it may not satisfy the aperiodicity condition. Decomposition for the Markov chain may be needed in order to rank the nodes in each ergodic class. Future research includes developing an efficient sampling scheme for both decomposition and ranking. Moreover, the asymptotic analysis for the sampling ratio of the sequential sampling procedure for ranking the node importance in a Markov chain also deserves future work (see \cite{peng2015asymptotic}).

%\appendices
%\section*{Appendix}

% if have a single appendix:
%\appendix[Proof of the Zonklar Equations]
% or
%\appendix  % for no appendix heading
% do not use \section anymore after \appendix, only \section*
% is possibly needed

% use appendices with more than one appendix
% then use \section to start each appendix
% you must declare a \section before using any
% \subsection or using \label (\appendices by itself
% starts a section numbered zero.)
%

% use section* for acknowledgment
\section*{Acknowledgment}
 This work was supported in part by the National Science Foundation of
China (NSFC) under Grants 71571048, 71720107003, 71690232,  and 61603321, and by the National Science
Foundation under Awards ECCS-1462409 and CMMI-1462787.

\bibliographystyle{IEEEtran}
\bibliography{ref}

% Generated by IEEEtran.bst, version: 1.12 (2007/01/11)
\begin{thebibliography}{10}
\providecommand{\url}[1]{#1}
\csname url@samestyle\endcsname
\providecommand{\newblock}{\relax}
\providecommand{\bibinfo}[2]{#2}
\providecommand{\BIBentrySTDinterwordspacing}{\spaceskip=0pt\relax}
\providecommand{\BIBentryALTinterwordstretchfactor}{4}
\providecommand{\BIBentryALTinterwordspacing}{\spaceskip=\fontdimen2\font plus
\BIBentryALTinterwordstretchfactor\fontdimen3\font minus
  \fontdimen4\font\relax}
\providecommand{\BIBforeignlanguage}[2]{{%
\expandafter\ifx\csname l@#1\endcsname\relax
\typeout{** WARNING: IEEEtran.bst: No hyphenation pattern has been}%
\typeout{** loaded for the language `#1'. Using the pattern for}%
\typeout{** the default language instead.}%
\else
\language=\csname l@#1\endcsname
\fi
#2}}
\providecommand{\BIBdecl}{\relax}
\BIBdecl

\bibitem{Brin1998The}
S.~Brin and L.~Page, ``The anatomy of a large-scale hypertextual web search
  engine,'' \emph{Computer Networks}, vol.~30, no. 1-7, pp. 107--117, 1998.

\bibitem{Page1998The}
L.~Page, ``The pagerank citation ranking : Bringing order to the web,''
  \emph{Stanford Digital Libraries Working Paper}, vol.~9, no.~1, pp. 1--14,
  1998.

\bibitem{Govan2008Ranking}
A.~Y. Govan, ``Ranking theory with application to popular sports,''
  \emph{Hyperfine Interactions}, vol. 175, no. 1-3, pp. 9--14, 2008.

\bibitem{Luke2007Ranking}
I.~Luke, ``Ranking {NCAA} sports teams with linear algebra,'' \emph{Master's
  thesis, College of Charleston}, 2007.

\bibitem{Weng2010TwitterRank}
J.~Weng, E.~P. Lim, J.~Jiang, and Q.~He, ``Twitterrank: finding topic-sensitive
  influential twitterers,'' in \emph{Proceedings of the third ACM international
  conference on Web search and data mining}.\hskip 1em plus 0.5em minus
  0.4em\relax ACM, 2010, pp. 261--270.

\bibitem{Bhat2012InvestorRank}
H.~S. Bhat and B.~Sims, ``Investorrank and an inverse problem for pagerank,''
  \emph{Electronic Theses $\&$ Dissertations}, 2012.

\bibitem{walker2007ranking}
D.~Walker, H.~Xie, K.-K. Yan, and S.~Maslov, ``Ranking scientific publications
  using a model of network traffic,'' \emph{Journal of Statistical Mechanics:
  Theory and Experiment}, vol. 2007, no.~06, p. P06010, 2007.

\bibitem{jomsri2011citerank}
P.~Jomsri, S.~Sanguansintukul, and W.~Choochaiwattana, ``Citerank: combination
  similarity and static ranking with research paper searching,''
  \emph{International Journal of Internet Technology and Secured Transactions},
  vol.~3, no.~2, pp. 161--177, 2011.

\bibitem{Langville2011Google}
A.~N. Langville and C.~D. Meyer, \emph{Google's PageRank and Beyond: The
  Science of Search Engine Rankings}.\hskip 1em plus 0.5em minus 0.4em\relax
  Princeton University Press, 2011.

\bibitem{Bechhofer1995Design}
R.~E. Bechhofer, T.~J. Santner, and D.~M. Goldsman, \emph{Design and Analysis
  of Experiments for Statistical Selection, Screening, and Multiple
  Comparisons}.\hskip 1em plus 0.5em minus 0.4em\relax John Wiley \& Sons, New
  York, 1995.

\bibitem{chen2011stochastic}
C.-H. Chen and L.~H. Lee, \emph{Stochastic simulation optimization: an optimal
  computing budget allocation}.\hskip 1em plus 0.5em minus 0.4em\relax World
  Scientific, 2011, vol.~1.

\bibitem{rinott1978two}
Y.~Rinott, ``On two-stage selection procedures and related
  probability-inequalities,'' \emph{Communications in Statistics-Theory and
  Methods}, vol.~7, no.~8, pp. 799--811, 1978.

\bibitem{Koenig1985A}
L.~W. Koenig and A.~M. Law, ``A procedure for selecting a subset of size m
  containing the l best of k independent normal populations, with applications
  to simulation,'' \emph{Communications in Statistics - Simulation and
  Computation}, vol.~14, no.~3, pp. 719--734, 1985.

\bibitem{kim2001fully}
S.-H. Kim and B.~L. Nelson, ``A fully sequential procedure for
  indifference-zone selection in simulation,'' \emph{ACM Transactions on
  Modeling and Computer Simulation}, vol.~11, no.~3, pp. 251--273, 2001.

\bibitem{chen2000simulation}
C.-H. Chen, J.~Lin, E.~Y{\"u}cesan, and S.~E. Chick, ``Simulation budget
  allocation for further enhancing the efficiency of ordinal optimization,''
  \emph{Discrete Event Dynamic Systems}, vol.~10, no.~3, pp. 251--270, 2000.

\bibitem{Chick2001New}
S.~E. Chick and K.~Inoue, ``New two-stage and sequential procedures for
  selecting the best simulated system,'' \emph{Operations Research}, vol.~49,
  no.~5, pp. 732--743, 2001.

\bibitem{peng2012efficient}
Y.~Peng, C.-H. Chen, M.~C. Fu, and J.-Q. Hu, ``Efficient simulation resource
  sharing and allocation for selecting the best,'' \emph{IEEE Transactions on
  Automatic Control}, vol.~58, no.~4, pp. 1017--1023, 2013.

\bibitem{chen2008efficient}
C.-H. Chen, D.~He, M.~Fu, and L.~H. Lee, ``Efficient simulation budget
  allocation for selecting an optimal subset,'' \emph{INFORMS Journal on
  Computing}, vol.~20, no.~4, pp. 579--595, 2008.

\bibitem{Zhang2012An}
S.~Zhang, L.~H. Lee, E.~P. Chew, C.~H. Chen, and H.~Y. Jen, ``An improved
  simulation budget allocation procedure to efficiently select the optimal
  subset of many alternatives,'' in \emph{IEEE International Conference on
  Automation Science and Engineering}, 2012, pp. 230--236.

\bibitem{Gao2015A}
S.~Gao and W.~Chen, ``A note on the subset selection for simulation
  optimization,'' in \emph{Winter Simulation Conference}, 2015, pp. 3768--3776.

\bibitem{Xiao2014Efficient}
H.~Xiao and L.~H. Lee, ``Efficient simulation budget allocation for ranking the
  top m designs,'' \emph{Discrete Dynamics in Nature and Society}, vol. 2014,
  pp. 1--9, 2014.

\bibitem{gao2015efficient}
S.~Gao and W.~Chen, ``Efficient subset selection for the expected opportunity
  cost,'' \emph{Automatica}, vol.~59, no.~C, pp. 19--26, 2015.

\bibitem{Langville2012Who}
A.~N. Langville and C.~D. Meyer, \emph{Who's {\#}1{?}: The Science of Rating
  and Ranking}.\hskip 1em plus 0.5em minus 0.4em\relax Princeton University
  Press, 2012.

\bibitem{Gelman2014Bayesian}
A.~Gelman, J.~B. Carlin, H.~S. Stern, D.~B. Dunson, A.~Vehtari, and D.~B.
  Rubin, \emph{Bayesian Data Analysis}.\hskip 1em plus 0.5em minus 0.4em\relax
  CRC Press, 2014.

\bibitem{Stewart1994Introduction}
W.~J. Stewart, \emph{Introduction to the Numerical Solution of Markov
  Chains}.\hskip 1em plus 0.5em minus 0.4em\relax Princeton University Press,
  1994.

\bibitem{barrett1994templates}
R.~Barrett, M.~W. Berry, T.~F. Chan, J.~Demmel, J.~Donato, J.~Dongarra,
  V.~Eijkhout, R.~Pozo, C.~Romine, and H.~Van~der Vorst, \emph{Templates for
  the solution of linear systems: building blocks for iterative methods}.\hskip
  1em plus 0.5em minus 0.4em\relax SIAM, 1994, vol.~43.

\bibitem{moler2002world}
C.~Moler, ``The world's largest matrix computation,'' \emph{MATLAB News and
  Notes}, pp. 12--13, 2002.

\bibitem{Song2011Eighty}
W.~T. Song and Y.~C. Chen, ``Eighty univariate distributions and their
  relationships displayed in a matrix format,'' \emph{IEEE Transactions on
  Automatic Control}, vol.~56, no.~8, pp. 1979--1984, 2011.

\bibitem{Cox2006Delta}
C.~Cox, \emph{Delta Method}.\hskip 1em plus 0.5em minus 0.4em\relax John Wiley
  \& Sons, Ltd, 2006.

\bibitem{Kullback1959Information}
S.~Kullback, \emph{Information theory and statistics}.\hskip 1em plus 0.5em
  minus 0.4em\relax John Wiley \& Sons, 1959.

\bibitem{Cover1991Elements}
T.~M. Cover and J.~A. Thomas, \emph{Elements of Information Theory.
  Wiley}.\hskip 1em plus 0.5em minus 0.4em\relax Tsinghua University Press,
  1991.

\bibitem{Abramowitz1964Handbook}
M.~Abramowitz, I.~Stegun, and D.~A. Mcquarrie, \emph{Handbook of Mathematical
  Functions}.\hskip 1em plus 0.5em minus 0.4em\relax United States Department
  of Commerce, National Bureau of Standards (NBS), 1964.

\bibitem{peng2014dynamic}
Y.~Peng, C.-H. Chen, M.~C. Fu, and J.-Q. Hu, ``Dynamic sampling allocation and
  design selection,'' \emph{INFORMS Journal on Computing}, vol.~28, no.~2, pp.
  195--208, 2016.

\bibitem{peng2017ranking}
Y.~Peng, E.~K. Chong, C.-H. Chen, and M.~C. Fu, ``Ranking and selection as
  stochastic control,'' \emph{IEEE Transactions on Automatic Control}, vol.~63,
  no.~8, pp. 2359--2373, 2018.

\bibitem{powell2007approximate}
W.~B. Powell, \emph{Approximate Dynamic Programming: Solving the curses of
  dimensionality}.\hskip 1em plus 0.5em minus 0.4em\relax John Wiley \& Sons,
  2007, vol. 703.

\bibitem{bertsekas1995dynamic}
D.~P. Bertsekas, \emph{Dynamic programming and optimal control}.\hskip 1em plus
  0.5em minus 0.4em\relax Athena scientific Belmont, MA, 1995, vol.~1, no.~2.

\bibitem{peng2015asymptotic}
Y.~Peng and M.~C. Fu, ``Myopic allocation policy with asymptotically optimal
  sampling rate,'' \emph{IEEE Transactions on Automatic Control}, vol.~62,
  no.~4, pp. 2041--2047, 2017.

\end{thebibliography}
%\bibliographystyle{apa}
%\bibliography{Peng}
%\bibliographystyle{IEEEtran}
%\bibliography{IEEEabrv,ref}

% Can use something like this to put references on a page
% by themselves when using endfloat and the captionsoff option.
\ifCLASSOPTIONcaptionsoff
  \newpage
\fi

\begin{IEEEbiography}[{\includegraphics[width=1in,height=1.25in,clip,keepaspectratio]{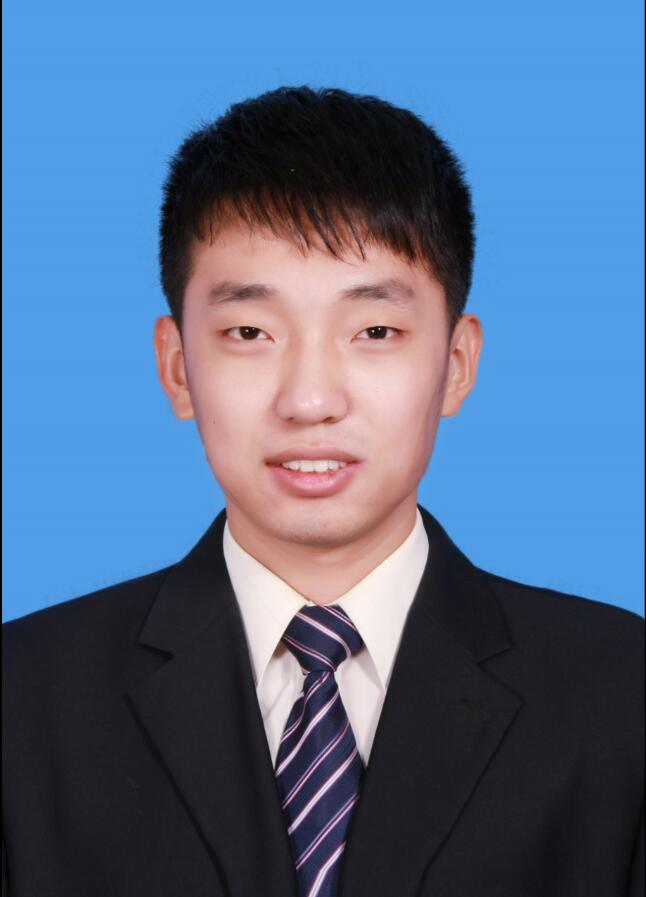}}]{Haidong Li}
is a Ph.D. candidate in the Department of Industrial Engineering and Management, Peking University, Beijing, China. He received his B.S. Degree from the Department of Engineering Mechanics at Peking University. His research interests include simulation optimization and network analysis.
\end{IEEEbiography}

\begin{IEEEbiography}[{\includegraphics[width=1in,height=1.25in,clip,keepaspectratio]{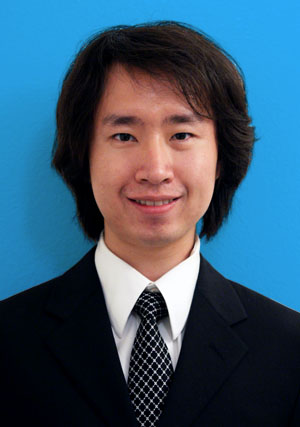}}]{Xiaoyun Xu}
received his B.S. Degree in Industrial Engineering from Tsinghua University in China in 2003, and his Ph.D. Degree in Industrial Engineering from Arizona State University in 2008. He is an Associate Professor at Department of Industrial Engineering and Management, Peking University, Beijing, China. His main research interests lie in scheduling, simulation optimization and their applications in manufacturing and service industries.
\end{IEEEbiography}

\begin{IEEEbiography}[{\includegraphics[width=1in,height=1.25in,clip,keepaspectratio]{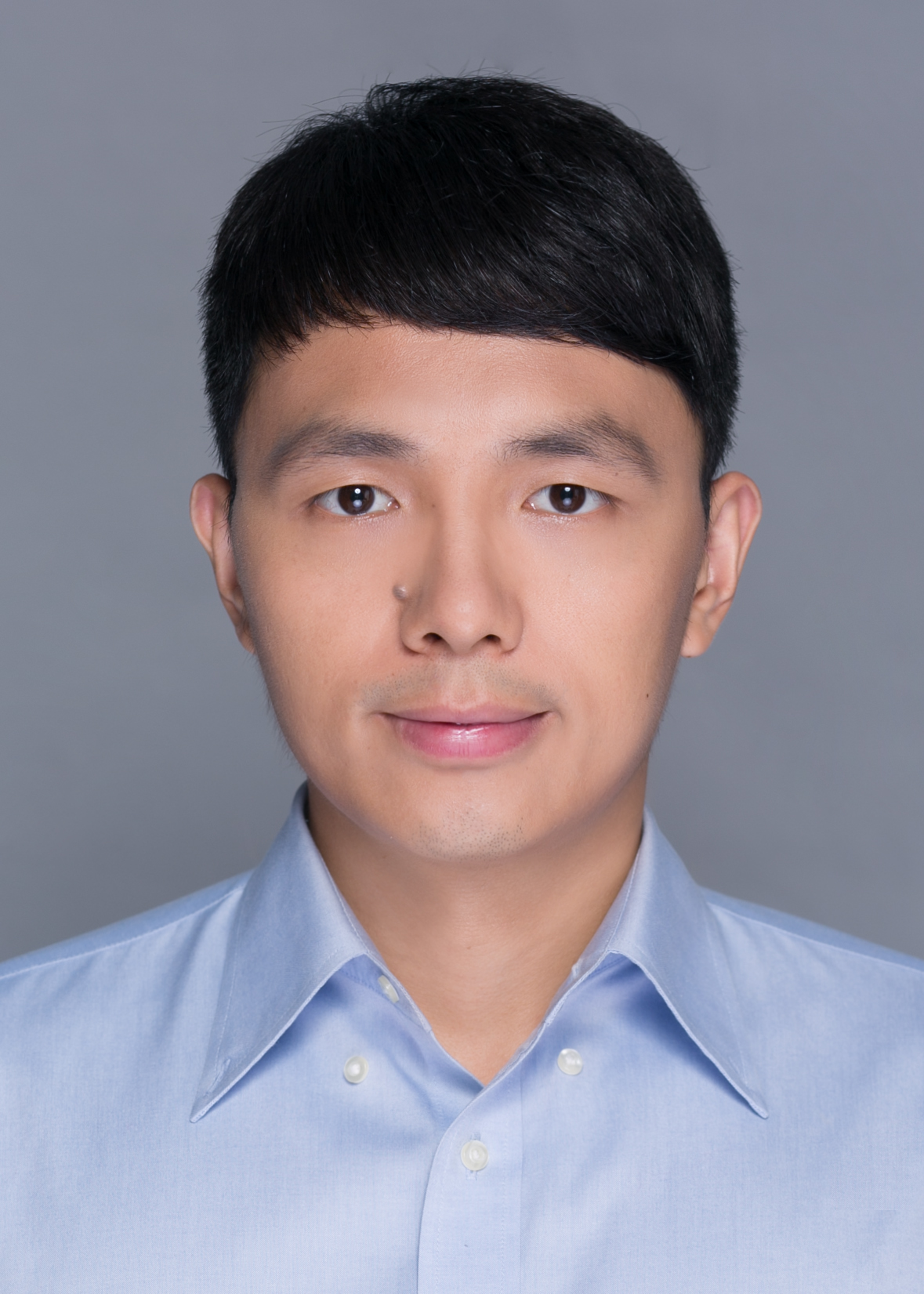}}]{Yijie Peng}
received the B.E. degree in mathematics from Wuhan University, Wuhan, China, in
2007, and the Ph.D. degree in management science from Fudan University, Shanghai, China, in 2014, respectively.He was a research fellow with Fudan University and George Mason University.

 He is currently an Assistant Professor at the Department of Industrial Engineering and Management, Peking University, Beijing, China. 
His research interests include ranking and selection and sensitivity analysis in the simulation optimization field with applications in data analytics, health care, and machine learning.
\end{IEEEbiography}

\begin{IEEEbiography}[{\includegraphics[width=1in,height=1.25in,clip,keepaspectratio]{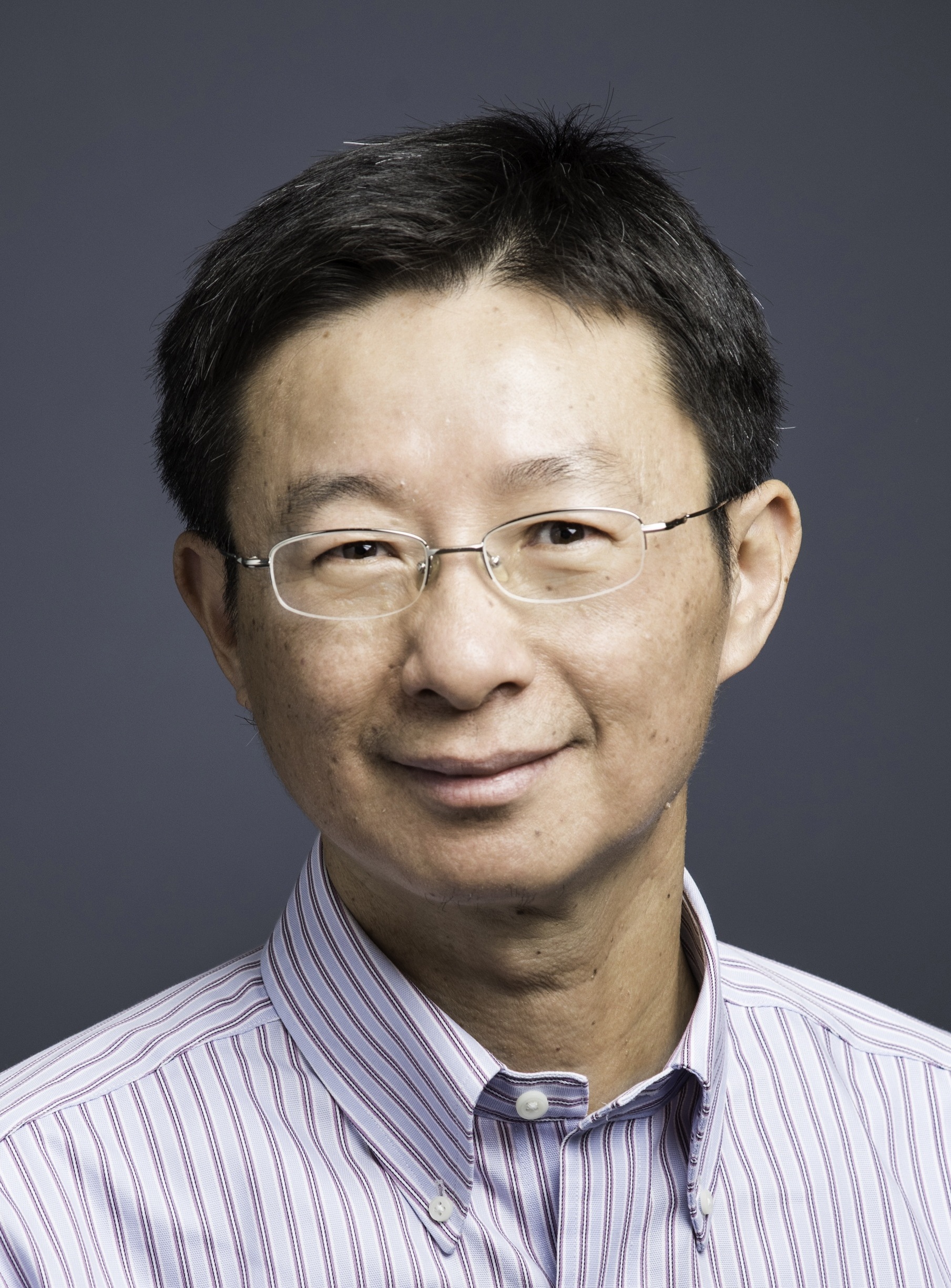}}]{Chun-Hung Chen}
received the Ph.D. degree in engineering sciences from Harvard University, Cambridge, MA, USA, in 1994. 

He is currently a Professor with the Department of Systems Engineering and Operations Research, George Mason University, Fairfax, VA, USA. He is the author of two books, including a best seller: Stochastic Simulation Optimization: An Optimal Computing Budget Allocation (World Scientific, 2010). 

Dr. Chen received the National Thousand Talents Award from the central government of China in 2011, the Best Automation Paper Award from the 2003 IEEE International Conference on Robotics and Automation, and 1994 Eliahu I. Jury Award from Harvard University. He was a Department Editor for the IIE Transactions, a Department Editor for Asia-Pacific Journal of Operational Research, an Associate Editor for the IEEE TRANSACTIONS ON AUTOMATION SCIENCE AND ENGINEERING, an Associate Editor for the IEEE TRANSACTIONS ON AUTOMATIC CONTROL, an Area Editor for the Journal of Simulation Modeling Practice and Theory, an Advisory Editor for the International Journal of Simulation and Process Modeling, and an Advisory Editor for the Journal of Traffic and Transportation Engineering.
\end{IEEEbiography}

% that's all folks
\end{document}